\newcommand{\ConfVer}[1]{}
\newcommand{\FullVer}[1]{#1}
\newcommand{\ConfVer}[1]{#1}
\newcommand{\FullVer}[1]{}
\definecolor{red25}{rgb}{0.4,0,0.0}%
\newcommand{\PStyle}[1]{\textcolor{red25}{\textsc{#1}}}
\newcommand{\FTC}{\PStyle{FTC}\xspace}
\newcommand{\FTM}{\PStyle{FTM}\xspace}
\newcommand{\kCenter}{\mathcal{A}_{\mathsf{c}}}
\newcommand{\kMedian}{\mathcal{A}_{\mathsf{m}}}
\newcommand{\fCenter}{\mathcal{A}_{\mathsf{fc}}}
\newcommand{\fMedian}{\mathcal{A}_{\mathsf{fm}}}
\definecolor{blue25}{rgb}{0,0,0.7}
\newcommand{\emphic}[2]{%
     \textcolor{blue25}{%
         \textbf{\emph{#1}}}%
         \index{#2}}
\newcommand{\emphi}[1]{\emphic{#1}{#1}}
\newcommand{\obslab}[1]{\label{observation:#1}}
\newcommand{\obsref}[1]{Observation~\ref{observation:#1}}
\providecommand{\lemlab}[1]{\label{lemma:#1}}
\providecommand{\lemref}[1]{Lemma~\ref{lemma:#1}}
\newcommand{\myqedsymbol}{\rule{2mm}{2mm}}
  \newtheorem{theorem}{Theorem}[section]
  \newtheorem{lemma}[theorem]{Lemma}
  \newenvironment{proof}{\trivlist\item[]\emph{Proof}:}%
                  {\unskip\nobreak\hskip 1em plus 1fil\nobreak%
                           \myqedsymbol%$\Box$
                           \parfillskip=0pt%
                           \endtrivlist}
  \newenvironment{proofof}[1]{\smallskip\noindent{\bf Proof of #1:}}%
  {\unskip\nobreak\hskip 1em plus 1fil\nobreak%
     \myqedsymbol%$\Box$
     \parfillskip=0pt%
     \endtrivlist}
\newtheorem{claim}[theorem]{Claim}
\newtheorem{observation}[theorem]{Observation}
\newtheorem{problem}[theorem]{Problem}
\newtheorem{corollary}[theorem]{Corollary}
\newcommand{\Term}[1]{\textsf{#1}}
\newcommand{\TermI}[1]{\Term{#1}\index{#1@\Term{#1}}}
\newcommand{\PTAS}{\TermI{PTAS}\xspace}
\theoremstyle{remark}{\theorembodyfont{\rm}
}%
\newcommand{\thmlab}[1]{{\label{theo:#1}}}
\newcommand{\thmref}[1]{Theorem~\ref{theo:#1}}
\newcommand{\eqlab}[1]{\label{equation:#1}}
\newcommand{\Eqref}[1]{Eq.~(\ref{equation:#1})}
\newcommand{\seclab}[1]{{\label{section:#1}}}
\newcommand{\secref}[1]{Section~\ref{section:#1}}
\newcommand{\claimlab}[1]{\label{claim:#1}}
\newcommand{\claimref}[1]{Claim~\ref{claim:#1}}
\newcommand{\brc}[1]{\left\{ {#1} \right\}}
\newcommand{\pth}[2][\!]{#1\left({#2}\right)}
\newcommand{\dist}[2]{\mathsf{d}\pth{#1,#2}}
\newcommand{\distPk}[3]{\mathsf{d}_{#3}\pth{#2,#1}}
\newcommand{\nnPk}[3]{\mathsf{nn}_{#3}\pth{#2,#1}}
\newcommand{\NNPk}[3]{\mathsf{NN}_{#3}\pth{#2,#1}}
\newcommand{\floor}[1]{\left\lfloor {#1} \right\rfloor}
\newcommand{\cardin}[1]{\left\lvert {#1} \right\rvert}
\newcommand{\eps}{{\varepsilon}}%
\newcommand{\divides}{|}
\newcommand{\etal}{\textit{et~al.}\xspace}
\newcommand{\PntSet}{\mathsf{P}}
\newcommand{\PntSetA}{Q}
\newcommand{\PntSetB}{S}
\newcommand{\PntSetC}{W}
\newcommand{\PntSetX}{\mathsf{X}}
\newcommand{\PntSetY}{\mathsf{Y}}
\newcommand{\ccost}[2]{\mu\pth{#1,#2}}
\newcommand{\const}{c}
\newcommand{\qtnt}{m}
\newcommand{\setC}{C}
\newcommand{\pnt} {\mathsf{p}}
\newcommand{\pntA}{\mathsf{q}}
\newcommand{\pntB}{\mathsf{v}}
\newcommand{\pntC}{\mathsf{w}}
\newcommand{\pntY}{\mathsf{y}}
\newcommand{\remove}[1]{}
\newcommand{\mtrA}{M}
\newcommand{\COPT}{C^{*}}
\newcommand{\ropt}{r_{\mathsf{opt}}}
\newcommand{\ralg}{r_{\mathsf{alg}}}
\newcommand{\rcen}{r_{\mathsf{cen}}}
\newcommand{\sopt}{\sigma_{\mathsf{opt}}}
\newcommand{\salg}{\sigma_{\mathsf{alg}}}
\newcommand{\smed}{\sigma_{\mathsf{med}}}
\newcommand{\ballA}{\mathsf{b}}
\newcommand{\num}{x}
\providecommand{\si}[1]{#1}
\newcommand{\ball}[2]{\mathsf{ball}\pth[]{#1,#2}}
\newcommand{\NirmanThanks}[1]{\thanks{
      University of Illinois; 
      {\tt \si{nkumar5}\atgen{}illinois.edu}; {\tt
         \url{http://www.cs.uiuc.edu/\string~\si{nkumar5}/}.} #1}}
\newcommand{\BenThanks}[1]{\thanks{
      University of Illinois; 
      {\tt \si{raichel2}\atgen{}illinois.edu}; {\tt
         \url{http://www.cs.uiuc.edu/\string~\si{raichel2}/}.} #1}}
\newcommand{\atgen}{\symbol{'100}}
\begin{document}

\title{Fault Tolerant Clustering Revisited}

\author{%
   Nirman Kumar\NirmanThanks{}
   \and%
   Benjamin Raichel \BenThanks{}}

%\date{\today}
\date{}

\maketitle

\begin{abstract}
In discrete $k$-center and $k$-median clustering, we are given a set of points 
$\PntSet$ in a metric space $\mtrA$, and the task is to output a set 
$\setC \subseteq \PntSet$, $\cardin{\setC} = k$, such that the cost of clustering $\PntSet$
using $\setC$ is as small as possible. For $k$-center, the cost is the furthest a point has to 
travel to its nearest center, whereas for $k$-median, 
the cost is the sum of all point to nearest 
center distances.  In the fault-tolerant versions of these problems, we are given an additional
parameter $1 \leq \ell \leq k$, such that when computing the cost of clustering, 
points are assigned to their 
$\ell$th nearest-neighbor in $\setC$, instead of their nearest neighbor.  
We provide constant factor 
approximation algorithms for these problems that are both conceptually simple and 
highly practical from an implementation stand-point.
\end{abstract}

\section{Introduction}
Two of the most common clustering problems
are $k$-center and $k$-median clustering. In both these problems, the 
goal is to find the minimum cost partition of a given point set $\PntSet$ 
into $k$ clusters.
Each cluster is defined by a point in the set of cluster \emphi{centers}, 
$\setC \subseteq \PntSet$, where $\cardin{\setC} = k$. 
In $k$-center clustering, the 
cost is the maximum distance of a point to its assigned cluster center, 
and in $k$-median clustering,
the cost is the sum of distances of points to their assigned cluster center. In both
cases, given a set of cluster centers $\setC$, a point is assigned
to its closest center in $\setC$.  
Both these problems are NP-hard for most metric spaces.
Hochbaum and Shmoys showed that $k$-center clustering has a $2$-approximation
algorithm, but for every $\eps > 0$ it cannot be approximated to better than $(2-\eps)$
unless P=NP \cite{hs-bphkc-85}. A $2$-approximation was also provided by 
Gonzalez \cite{g-cmmid-85}, and by Feder and Greene \cite{fg-oafac-88}.
For $k$-median, the best known approximation factor is $1 + \sqrt{3} + \eps$. This 
is a recent result of Li and Svensson \cite{ls-akmpa-13}, but
the approximation version of the $k$-median problem has a long history, and before the result
of Li and Svensson, the best known result was by Arya \etal \cite{agkmp-lshkm-01}, that
achieved an approximation factor of $(3+\eps)$ for any $\eps > 0$, using local search.
In general metric spaces, $k$-median is also APX hard. Jain \etal
showed that $k$-median is hard to approximate within a factor of 
$1+2/e \approx 1.736$ \cite{jms-ngflp-02}.
In Euclidean spaces, the $k$-center problem remains APX-hard \cite{fg-oafac-88}, 
while $k$-median admits a \PTAS \cite{arr-asekm-98,kr-nltas-99,hm-ckmkm-04}.

\paragraph{Fault-Tolerance.}
  As mentioned earlier, in both the $k$-center and $k$-median problems, each point is assigned to its
  closest center. Consider a realistic scenario where $k$-center clustering is used
  to decide in which $k$ of $n$ cities, certain facilities (say Sprawlmarts or hospitals) 
  are opened, so that for clients
  in the $n$ cities, their maximum distance to a facility is minimized. Once the $k$ cities
  are decided upon, clearly each client goes to its nearest such facility when it requires
  service. Due to facility downtimes however, sometimes
  clients may need to go to their second closest, or third closest facility. Thus,
  in the fault-tolerant version of the $k$-center problem, we say that the cost
  of a client is the distance to its $\ell$th nearest facility for some fixed 
  $1 \leq \ell \leq k$. 
  The problem now is to find a set of $k$ centers so that the worst case cost is minimized, 
  where in the worst case each client actually goes to its $\ell$th nearest facility, 
  and the cost of clustering is the maximum distance traveled by any client.

  The fault-tolerant $k$-center 
  problem was first studied by
  Krumke \cite{k-gpcp-95},
  who gave a $4$-approximation algorithm for this problem. 
  Chaudhuri \etal provided a 
  $2$-approximation algorithm for this problem \cite{cgr-pnkcp-98}, which is the best possible
  under standard complexity theoretic assumptions. 
  In both these papers, the version considered, differs slightly from ours in that one only 
  considers points which are not centers when computing 
  the point that has the furthest distance to its $\ell$th closest center.
  Khuller \etal \cite{kps-ftkcp-00} later considered both versions of the $k$-center problem. 
  Their first version is the same as ours, i.e. the cost
  is the maximum distance of any point (including centers) to its $\ell$th nearest center.
  They gave a $2$-approximation when $\ell < 4$ and a $3$-approximation otherwise. 
  Their second version is the same as that of Krumke \cite{k-gpcp-95}. 
  For this version, they provided a $2$-approximation 
  algorithm matching the result of Chaudhuri \etal \cite{cgr-pnkcp-98}. 

  For $k$-median clustering, a fault-tolerant version has been considered by
  Swamy and Shmoys \cite{ss-ftfl-03}. 
  %% Commented in full version  
  %However, our version is different from theirs.
  %% Included in full version
  In their version, $k$-centers need to be opened,
  and in addition there is a fault-tolerance parameter $r \leq k$. The cost
  for a client is the sum of distances to its $r$ closest facilities. 
  Swamy and Shmoys actually considered a much more general setting for the fault tolerant
  facility location problem, where the requirement $r_j$ for a client $j$ could
  be non-uniform. However, for the fault tolerant $k$-median problem, the
  algorithm they provided was for a uniform requirement $r_j = r$ for all clients. 
  For this problem, they provided a $4$-approximation algorithm. The fault tolerant version we 
  consider is different from the version of Swamy and Shmoys. In our version, the cost
  for a client is its distance to its $\ell$th nearest facility (instead of the sum to its 
  $l$ nearest facilities), and we add the cost for all the clients to get the cost of the clustering.

\paragraph{Our Contribution.}
  Our main contribution is in providing and proving the correctness of a 
  natural technique for fault-tolerant clustering.
  In particular, letting $\qtnt = \floor{k/\ell}$, 
  we show that given a set of centers which is a constant factor 
  approximation to the optimal $\qtnt$-center (resp. $\qtnt$-median) clustering, 
  one can easily compute a set of $k$ centers 
  whose cost is a constant factor approximation to the optimal fault-tolerant $k$-center 
  (resp. $k$-median) clustering. 
  Specifically, in order to turn the non-fault-tolerant solution into a fault-tolerant one, 
  simply add for each point of the $\qtnt$ center set, its $\ell$ nearest neighbors 
  in $\PntSet$.  In other words, our main contribution is in proving a relationship 
  between the fault-tolerant and non-fault-tolerant cases, specifically that the 
  non-fault-tolerant solution for $\qtnt$ centers is already a near optimal fault-tolerant 
  solution in that, up to a constant factor, it is enough to ``reinforce'' the current 
  center locations rather than looking for new ones. 

  For fault-tolerant $k$-center we prove that if one applies 
  this post-processing technique to any $\const$-approximate solution to the non-fault-tolerant 
  problem with $\qtnt$ centers, then one is guaranteed a $(1+2\const)$-approximation to the 
  optimal fault-tolerant clustering.  Similarly, for fault-tolerant $k$-median we show this post 
  processing technique leads to a $(1+4\const)$-approximation.
  %The proof is presented in a unified analysis 
  %simultaneously applying to both fault-tolerant center and fault-tolerant median.  

  Our second main result is that using the algorithm of Gonzalez \cite{g-cmmid-85} 
  for the initial $\qtnt$-center solution, gives a tighter approximation ratio guarantee.  
  Specifically, we get a $3$-approximation when $\ell \divides k$, 
  and a $4$-approximation otherwise, for fault-tolerant $k$-center. 
  Additionally, on the median side, to the best of our knowledge, 
  we are the first to consider this particular variant of fault-tolerant 
  $k$-median clustering.

  The approximation ratios of our algorithms are reasonable but not optimal. However, 
  the authors feel that the algorithms more than make up for this in their 
  conceptual simplicity and practicality from an implementation stand-point.  
  Notably, if one has an existing implementation of an  
  $m$-center or an $m$-median clustering approximation algorithm, 
  one can immediately turn it into a fault-tolerant clustering algorithm for $k$ centers 
  with this technique.

\paragraph{Organization.}
  In \secref{prelims} we set up notation and formally define our variants for the fault-tolerant
  $k$-center and $k$-median problems. In \secref{algos} we review the algorithm of 
  Gonzalez \cite{g-cmmid-85}, and present our algorithms for the fault-tolerant 
  $k$-center and $k$-median problems. In \secref{results} we analyze the approximation ratios 
  of our algorithm. We conclude
  in \secref{conclusions}. 

%%%%%%%%%%%%%%%%%%%%%%%%%%%%%%%%%%%%%%%%%%%%%%%%%%%%%%%
%%%%%%%%%%%%%%%%%%%%%%%%%%%%%%%%%%%%%%%%%%%%%%%%%%%%%%%

\section{Preliminaries}
\seclab{prelims}

%\subsection{Notation}
We are given a set of $n$ points $\PntSet = \brc{\pnt_1, \dots, \pnt_n}$ in
a metric space $\mtrA$.
Let $\dist{\pnt}{\pnt'}$ denote the distance between the points $\pnt$ and $\pnt'$ in $\mtrA$.
For a point $\pnt \in \mtrA$, and a number $x \geq 0$, let $\ball{\pnt}{x}$ denote the closed
ball of radius $x$ with center $\pnt$. 
For a point $\pnt \in \mtrA$,
a subset $\PntSetB \subseteq \PntSet$, and an integer $1\leq i \leq \cardin{\PntSetB}$, let 
$\distPk{\PntSetB}{\pnt}{i}$ denote the radius of the smallest (closed) ball with center
$\pnt$ that contains at least $i$ points in the set $\PntSetB$. 
%Clearly if $\pnt \in \PntSetB$ we have that $\distPk{\PntSetB}{\pnt}{1} = 0$. 
Let $\nnPk{\PntSetB}{\pnt}{i}$ denote the $i$th nearest neighbor of $\pnt$ in $\PntSetB$, 
i.e. the point in $\PntSetB$ such that
$\dist{\pnt}{\nnPk{\PntSetB}{\pnt}{i}} = \distPk{\PntSetB}{\pnt}{i}$.\footnote{In case 
of non unique distances, we use the standard technique of lexicographic ordering of
the pairs $(\dist{\pnt}{\pnt_j},j)$ to ensure that the $1$st, $2$nd, $\dots$, 
$\cardin{\PntSetB}$th, nearest-neighbors of $\pnt$ are all unique.}
Let $\NNPk{\PntSetB}{\pnt}{i} = \cup_{j=1}^i \{\nnPk{\PntSetB}{\pnt}{j}\}$ be the set of 
$i$ nearest neighbors of $\pnt$ in $\PntSetB$. By definition, for $1 \leq i \leq \cardin{S}$,
$\cardin{\NNPk{\PntSetB}{\pnt}{i}} = i$. 
The following is an easy observation.
\begin{observation}
  \obslab{lipschitz}
  For any fixed $\PntSetA \subseteq \PntSet$ and integer 
  $1 \leq i \leq \cardin{\PntSetA}$, the function $\distPk{\PntSetA}{\cdot}{i}$ is a 
  $1$-Lipschitz function of its argument, i.e., for any $\pnt, \pntA \in \mtrA$, 
  $\distPk{\PntSetA}{\pnt}{i} \leq \distPk{\PntSetA}{\pntA}{i} + \dist{\pnt}{\pntA}$.
\end{observation}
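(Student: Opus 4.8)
The plan is to reduce everything to the triangle inequality, exploiting the fact that $\distPk{\PntSetA}{\cdot}{i}$ is defined as a \emph{minimum} radius. Set $\rdist = \distPk{\PntSetA}{\pntA}{i}$. By the definition of this quantity, the closed ball $\ball{\pntA}{\rdist}$ contains at least $i$ points of $\PntSetA$. The core observation I would establish is a ball-inclusion statement: $\ball{\pntA}{\rdist} \subseteq \ball{\pnt}{\rdist + \dist{\pnt}{\pntA}}$.

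To see the inclusion, I would take an arbitrary point $\pntB \in \ball{\pntA}{\rdist}$, so that $\dist{\pntA}{\pntB} \leq \rdist$, and apply the triangle inequality to get $\dist{\pnt}{\pntB} \leq \dist{\pnt}{\pntA} + \dist{\pntA}{\pntB} \leq \dist{\pnt}{\pntA} + \rdist$. Hence $\pntB$ lies in the closed ball centered at $\pnt$ of radius $\rdist + \dist{\pnt}{\pntA}$, which proves the inclusion. Since $\ball{\pntA}{\rdist}$ already contained at least $i$ points of $\PntSetA$, the larger ball $\ball{\pnt}{\rdist + \dist{\pnt}{\pntA}}$ contains at least $i$ points of $\PntSetA$ as well.

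Finally, because $\distPk{\PntSetA}{\pnt}{i}$ is by definition the \emph{smallest} radius of a ball centered at $\pnt$ enclosing at least $i$ points of $\PntSetA$, and we have exhibited one such ball of radius $\rdist + \dist{\pnt}{\pntA}$, minimality yields $\distPk{\PntSetA}{\pnt}{i} \leq \rdist + \dist{\pnt}{\pntA} = \distPk{\PntSetA}{\pntA}{i} + \dist{\pnt}{\pntA}$, which is exactly the claimed Lipschitz bound.

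I do not anticipate any genuine obstacle here; the statement is symmetric in the roles of $\pnt$ and $\pntA$, so no separate case is needed. The only subtlety worth a careful word is that the balls are \emph{closed} and the defining condition is ``at least $i$ points'': both of these cooperate with the inclusion argument, since enlarging the radius and switching centers can only add points to the enclosed set, never remove them, so the minimum defining $\distPk{\PntSetA}{\pnt}{i}$ is well controlled.
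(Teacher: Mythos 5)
Your proof is correct: the paper states this as an easy observation without giving a proof, and your ball-inclusion argument via the triangle inequality, combined with the minimality in the definition of $\distPk{\PntSetA}{\pnt}{i}$, is exactly the standard argument the authors are implicitly relying on. No gaps.
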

\subsection{Problem Definitions}
%\subsubsection{$k$-center}
\begin{problem}[Fault-tolerant $k$-center]
Let $\PntSet$ be a set of $n$ points in $\mtrA$, and let $k$ and $\ell$ be two given 
integer parameters such that $1\leq \ell\leq k\leq n$.  
For a subset $\setC \subseteq \PntSet$, we define the cost function $\ccost{\PntSet}{\setC}$ as,
\begin{align*}
    \ccost{\PntSet}{\setC} = \max_{\pnt \in \PntSet} \distPk{\setC}{\pnt}{\ell}.
\end{align*}
The fault-tolerant $k$-center problem, denoted $\FTC(\PntSet,k,\ell)$, is to compute a set 
$\COPT$ with $\cardin{\COPT} = k$ such that,
\begin{align*}
    \ccost{\PntSet}{\COPT} = \min_{\setC \subseteq \PntSet, \cardin{\setC} = k}
                             \ccost{\PntSet}{\setC}.   
\end{align*}
\end{problem}
For a given instance of $\FTC(\PntSet,k,\ell)$, we call 
$\COPT$ the optimum solution and we let $\ropt$ denote its cost, i.e. 
$\ropt = \ccost{\PntSet}{\COPT}$. The classical \emphi{$k$-center} clustering problem on a point set $\PntSet$ is $\FTC(\PntSet,k,1)$, and is referred to
as the \emphi{non-fault-tolerant} $k$-center problem.

%%%%%%%%%%%%%%%%%%%%%%%%%%%%%%%%%%%%%%%%%%%%%%%%%%%%%%%
%%%%%%%%%%%%%%%%%%%%%%%%%%%%%%%%%%%%%%%%%%%%%%%%%%%%%%%

%\subsubsection{$k$-median}
\begin{problem}[Fault-tolerant $k$-median]
Let $\PntSet$ be set of $n$ points in $\mtrA$, and let $k$ and $\ell$ be two given 
integer parameters such that $1\leq \ell\leq k\leq n$. For a subset 
$\setC\subseteq \PntSet$, we define the cost function $\ccost{\PntSet}{\setC}$ as,
\begin{align*}
    \ccost{\PntSet}{\setC} = \sum_{\pnt \in \PntSet} \distPk{\setC}{\pnt}{\ell}.
\end{align*}
The fault-tolerant $k$-median problem, denoted $\FTM(\PntSet,k,\ell)$, is to compute a set 
$\COPT$ with $\cardin{\COPT} = k$ such that,
\begin{align*}
    \ccost{\PntSet}{\COPT} = \min_{\setC \subseteq \PntSet, \cardin{\setC} = k}
                             \ccost{\PntSet}{\setC}.   
\end{align*}
\end{problem}
For a given instance of $\FTM(\PntSet,k,\ell)$, we call 
$\COPT$ the optimum solution and we let $\sopt$ denote its cost, i.e. 
$\sopt = \ccost{\PntSet}{\COPT}$. The classical \emphi{$k$-median} clustering problem on a point set $\PntSet$ is $\FTM(\PntSet,k,1)$, and is referred to as the \emphi{non-fault-tolerant}
$k$-median problem.

%%%%%%%%%%%%%%%%%%%%%%%%%%%%%%%%%%%%%%%%%%%%%%%%%%%%%%%
%%%%%%%%%%%%%%%%%%%%%%%%%%%%%%%%%%%%%%%%%%%%%%%%%%%%%%%

\section{Algorithms}
\seclab{algos}
Our algorithms for both problems, $\FTC(\PntSet,k,\ell)$ and $\FTM(\PntSet,k,\ell)$,
have the same structure. In the first step they run an approximation
algorithm for the non-fault-tolerant version of the respective problem, 
for $\qtnt=\floor{k/\ell}$ centers, 
and in the second step, the solution output by the first step is added to in 
a straightforward manner described below. 
Notice that for either fault-tolerant problem, 
any approximation algorithm for the non-fault-tolerant version can be used in the first step.
In particular, we prove that if the chosen algorithm for this first step is a 
$\const$-approximation algorithm for the non-fault-tolerant problem for $\qtnt$ centers, 
then the set we output at the end of step two will be a 
$(1+2\const)$-approximation (resp. $(1+4\const)$-approximation) for the fault-tolerant 
$k$-center (resp. $k$-median) problem with $k$ centers. 

Natural choices to use for our non-fault-tolerant 
$\qtnt$-median algorithm include the local search algorithm of 
Arya \etal \cite{agkmp-lshkm-01}, which is favored for its combinatorial nature, and 
simplicity of implementation, or the recent algorithm by 
Li and Svensson \cite{ls-akmpa-13}, which facilitates a slight improvement in 
the approximation factor. For the algorithms of Arya \etal and that of Li and Svennson we
refer the reader to the respective papers, as knowledge of these algorithms is not 
required for understanding our algorithm. We let $\kMedian(\PntSet,\qtnt)$ 
denote the chosen approximation algorithm for $\qtnt$-median.

Similarly, we let $\kCenter(\PntSet,\qtnt)$ denote the chosen approximation algorithm for 
non-fault-tolerant $\qtnt$-center. Perhaps the most natural choice for our $\qtnt$-center 
algorithm is the $2$-approximation algorithm by Gonzalez \cite{g-cmmid-85}.  
In fact, in \secref{improved} we show that this particular choice 
leads to a simpler analysis than the general case, and produces a much tighter 
approximation ratio guarantee.  Since knowledge of the algorithm of Gonzalez is needed for 
this analysis, we briefly review this algorithm below in \secref{algo:gonzalez}.

\subsection{Fault-tolerant algorithms}
We now describe the algorithms for fault-tolerant $k$-center and fault-tolerant $k$-median, 
that is $\FTC(\PntSet,k,\ell)$ and $\FTM(\PntSet,k,\ell)$.

For the problem $\FTC(\PntSet,k,\ell)$ (resp. $\FTM(\PntSet,k,\ell)$) 
first run the algorithm $\kCenter(\PntSet,\qtnt)$ (resp. $\kMedian(\PntSet,\qtnt)$). 
Let $\PntSetA \subseteq \PntSet$ denote the set of $\qtnt$ centers output, and let 
$\PntSetA = \{ \pntA_1, \dots, \pntA_\qtnt \}$.
Then the set of centers we output for our fault-tolerant solution is, 
$\setC = \bigcup_{i=1}^\qtnt \NNPk{\PntSet}{\pntA_i}{\ell}$.
That is, we take the $\ell$ nearest neighbors of each point $\pntA_i$ in $\PntSet$,
for $i = 1,\dots, \qtnt$. We only use this set $\setC$ in the analysis. If however $\setC$
has less than $k$ points, we can throw in $k - \cardin{\setC}$ additional points
chosen arbitrarily from $\PntSet \setminus C$, since adding additional centers can only decrease the cost of our solution.

Let $\fCenter(\PntSet,k,\ell)$ and $\fMedian(\PntSet,k,\ell)$ denote these algorithms 
for $\FTC(\PntSet,k,\ell)$ and $\FTM(\PntSet,k,\ell)$, respectively. 

%%%%%%%%%%%%%%%%%%%%%%%%%%%%%%%%%%%%%%%%%%%%%%%%%%%%%%%
%%%%%%%%%%%%%%%%%%%%%%%%%%%%%%%%%%%%%%%%%%%%%%%%%%%%%%%

\subsection{The algorithm of Gonzalez}
\seclab{algo:gonzalez}
We now describe the $2$-approximation algorithm for the $\qtnt$-center problem, 
due to Gonzalez \cite{g-cmmid-85}. Gonzalez's algorithm builds a solution set $\setC$ 
iteratively. To kick-start the iteration, we let $\setC = \{ \pnt \}$ where $\pnt \in \PntSet$ is an arbitrary point. Until $\qtnt$ points have been accumulated, the algorithm repeatedly looks 
for the furthest point in $\PntSet$ to the current set $\setC$, and adds the found point 
to $\setC$. More formally, at each step we compute 
$\arg \max_{\pntA \in \PntSet } \dist{\pntA}{\setC}$, 
and add it to $\setC$.

This algorithm is not only simple from a conceptual stand-point, 
but also in regards to implementation and running time.  Indeed, by just maintaining 
for each point in $\PntSet$, its current nearest center among $\setC$, the above 
algorithm can be implemented in $O(n)$ time per iteration, for a total time of $O(n\qtnt)$. 
As mentioned earlier, the result of Hochbaum and Shmoys \cite{hs-bphkc-85} implies that the
approximation factor for this algorithm for general metric spaces, is the best possible.

%%%%%%%%%%%%%%%%%%%%%%%%%%%%%%%%%%%%%%%%%%%%%%%%%%%%%%%
%%%%%%%%%%%%%%%%%%%%%%%%%%%%%%%%%%%%%%%%%%%%%%%%%%%%%%%

\section{Results and Analysis}
\seclab{results}

We now present our results and their proofs. Our first result, is that using a factor 
$\const$-approximation algorithm for $\kMedian(\PntSet,\qtnt)$ in the algorithm 
$\fMedian(\PntSet,k,\ell)$ gives a $(1+4\const)$-approximation algorithm for the
problem $\FTM(\PntSet,k,\ell)$.
The structure of the $k$-center problem allows us to use a nearly identical analysis 
except with one simplification, yielding an improved $(1+2\const)$-approximation algorithm 
for the problem $\FTC(\PntSet,k,\ell)$. 
% We now present our results and their proofs. Our first result, which has 
% a similar form for both $\FTC(\PntSet,k,\ell)$ and $\FTM(\PntSet,k,\ell)$,
% is that using a factor $\const$-approximation algorithm for $\kCenter(\PntSet,\qtnt)$ 
% (respectively $\kMedian(\PntSet,\qtnt)$) in the algorithm $\fCenter(\PntSet,k,\ell)$ (respectively
% $\fMedian(\PntSet,k,\ell)$) gives a factor $(1+4\const)$-approximation algorithm for the
% problem $\FTC(\PntSet,k,\ell)$ (respectively $\FTM(\PntSet,k,\ell)$). 
Our second result, 
%presented in \secref{improved}, 
shows that if one uses the  algorithm of 
Gonzalez \cite{g-cmmid-85} for the subroutine $\kCenter(\PntSet,\qtnt)$, 
then one can guarantee a tighter approximation ratio of $4$ (or $3$ if $l\divides k$), 
as opposed to the $5$ guaranteed by our first result. 
%We first present the analysis for the fault-tolerant $k$-median. 
\subsection{Analysis for fault-tolerant $k$-median} 
\begin{theorem}
\thmlab{median}
For a given point set $\PntSet$ in a metric space $\mtrA$ with $\cardin{\PntSet} = n$, 
the algorithm $\fMedian(\PntSet,k,\ell)$ achieves a $(1+4\const)$-approximation to the 
optimal solution of $\FTM(\PntSet,k,\ell)$, 
where $\const$ is the approximation guarantee of the 
subroutine $\kMedian(\PntSet,\qtnt)$, where $\qtnt = \floor{k / \ell}$.
\end{theorem}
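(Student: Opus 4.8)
The plan is to sandwich the cost of our output set $\setC$ between two inequalities and then invoke the $\const$-approximation guarantee of the subroutine. Write $\nu(\PntSetA) = \sum_{\pnt \in \PntSet}\dist{\pnt}{\PntSetA}$ for the ordinary (non-fault-tolerant) $\qtnt$-median cost of the centers $\PntSetA$ returned by $\kMedian(\PntSet,\qtnt)$, and let $\mathsf{opt}_\qtnt$ be the optimal such cost over all $\qtnt$-point subsets of $\PntSet$. I would establish two claims: \textbf{(i)} a \emph{transfer} bound $\ccost{\PntSet}{\setC} \le 2\,\nu(\PntSetA) + \sopt$, relating the fault-tolerant cost of the reinforced set to the ordinary cost of $\PntSetA$, and \textbf{(ii)} a \emph{packing} bound $\mathsf{opt}_\qtnt \le 2\sopt$. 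Since $\kMedian$ is a $\const$-approximation, $\nu(\PntSetA) \le \const\,\mathsf{opt}_\qtnt$, so chaining the three facts yields $\ccost{\PntSet}{\setC} \le 2\const(2\sopt) + \sopt = (1+4\const)\sopt$, which is exactly the claimed ratio.

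For \textbf{(i)} I would argue pointwise. Fix $\pnt \in \PntSet$ and let $\pntA$ be its nearest center in $\PntSetA$, so $\dist{\pnt}{\pntA} = \dist{\pnt}{\PntSetA}$. Since $\setC \supseteq \NNPk{\PntSet}{\pntA}{\ell}$, these $\ell$ nearest neighbors of $\pntA$ all lie within $\distPk{\PntSet}{\pntA}{\ell}$ of $\pntA$, hence within $\dist{\pnt}{\pntA} + \distPk{\PntSet}{\pntA}{\ell}$ of $\pnt$ by the triangle inequality; this gives $\ell$ centers of $\setC$ near $\pnt$, so $\distPk{\setC}{\pnt}{\ell} \le \dist{\pnt}{\pntA} + \distPk{\PntSet}{\pntA}{\ell}$. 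Applying \obsref{lipschitz} to replace $\distPk{\PntSet}{\pntA}{\ell}$ by $\distPk{\PntSet}{\pnt}{\ell} + \dist{\pnt}{\pntA}$, and then using $\COPT \subseteq \PntSet$ (which forces $\distPk{\PntSet}{\pnt}{\ell} \le \distPk{\COPT}{\pnt}{\ell}$, as the $\ell$th nearest neighbor can only get closer in a superset), I obtain $\distPk{\setC}{\pnt}{\ell} \le 2\dist{\pnt}{\pntA} + \distPk{\COPT}{\pnt}{\ell}$. Summing over all $\pnt$ gives (i).

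The main obstacle is \textbf{(ii)}, which must turn the $\ell$-fold redundancy of $\COPT$ into an ordinary $\qtnt$-median guarantee while respecting the budget $\qtnt = \floor{k/\ell}$. Here I would run a greedy packing on the balls $\ball{\pnt}{r_\pnt}$, with $r_\pnt = \distPk{\COPT}{\pnt}{\ell}$. Processing points in nondecreasing order of $r_\pnt$, I build a set $L$ of \emph{landmarks}, adding the current $\pnt$ to $L$ exactly when $\ball{\pnt}{r_\pnt}$ is disjoint from $\ball{q}{r_q}$ for every landmark $q$ already chosen. Two facts drive the argument. First, every $\pnt$ lies within $2r_\pnt$ of some landmark: if $\pnt \notin L$ it was blocked by a landmark $q$ with $\dist{\pnt}{q} \le r_\pnt + r_q$, and since $q$ was processed earlier, $r_q \le r_\pnt$. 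Second, the landmark balls are pairwise disjoint and each $\ball{q}{r_q}$ contains at least $\ell$ points of $\COPT$, so $\ell\,\cardin{L} \le \cardin{\COPT} = k$ and hence $\cardin{L} \le \floor{k/\ell} = \qtnt$. Taking $L$ itself as a feasible $\qtnt$-median solution then gives $\mathsf{opt}_\qtnt \le \sum_{\pnt}\dist{\pnt}{L} \le \sum_\pnt 2r_\pnt = 2\sopt$, establishing (ii).

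Finally I would dispatch the routine feasibility points: $\qtnt \ge 1$ because $\ell \le k$; the set $\setC$ contains at least $\ell$ centers, so its fault-tolerant cost is well defined; and since adding centers only lowers the cost, padding $L$ (or $\setC$) to the required size never hurts, so bounding the cost of the constructed sets suffices.
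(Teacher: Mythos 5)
Your proof is correct, and its skeleton coincides with the paper's: your transfer bound (i) is exactly \claimref{first} (the paper obtains the same pointwise inequality $\distPk{\setC}{\pnt}{\ell} \leq 2\distPk{\PntSetA}{\pnt}{1} + \distPk{\PntSet}{\pnt}{\ell}$ via \obsref{lipschitz} and the fact that $\NNPk{\PntSet}{\pntA}{\ell}\subseteq\setC$, just as you do, and then uses $\distPk{\PntSet}{\pnt}{\ell}\leq\distPk{\COPT}{\pnt}{\ell}$), and your packing bound (ii) is \claimref{second}; the final chaining with the subroutine's $\const$-approximation is identical. The one genuine difference is how the packing bound is established. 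The paper proves a standalone statement (\lemref{subset}) by repeatedly scooping out of $\COPT$ the $\ell$ points of the minimum-radius ball, centered at a point of $\PntSet$, that contains $\ell$ surviving optimal centers, for exactly $\floor{k/\ell}$ rounds; the chosen balls need not be disjoint, the size bound is automatic, and coverage requires the ``first index at which my ball is touched'' argument. You instead fix all radii $r_\pnt=\distPk{\COPT}{\pnt}{\ell}$ up front and greedily extract a maximal pairwise-disjoint subfamily in nondecreasing order of radius, so that coverage is immediate (a blocked point is within $r_\pnt+r_q\leq 2r_\pnt$ of its earlier blocker) and the cardinality bound $\ell\cardin{L}\leq k$ follows from disjointness plus the fact that each landmark ball holds $\ell$ of the $k$ optimal centers. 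Both are standard greedy ball packings yielding the same constant $2$ and the same budget $\qtnt$; yours is slightly more self-contained, while the paper's is phrased as a reusable lemma for arbitrary sets $\PntSetX,\PntSetY$ (and is reused implicitly in the $k$-center analysis). Your handling of the routine points (integrality giving $\cardin{L}\leq\qtnt$, padding $\setC$ to size $k$ only decreasing the cost) matches the paper.
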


As a corollary we have,
\begin{corollary}
  There is a $12$-approximation algorithm for the
  problem $\FTM(\PntSet,k,\ell)$.
\end{corollary}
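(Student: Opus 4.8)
The plan is to bound the cost of our output set $\setC$ by a combination of the non-fault-tolerant $\qtnt$-median cost of $\PntSetA = \{\pntA_1,\dots,\pntA_\qtnt\}$ (the set returned by the first step) and the optimum fault-tolerant cost $\sopt$ itself, and then to relate the former to $\sopt$ through the $\const$-approximation guarantee. Write $r_\pnt = \distPk{\COPT}{\pnt}{\ell}$, so $\sopt = \sum_{\pnt}r_\pnt$. Fix $\pnt\in\PntSet$ and let $\pntA$ be its nearest center in $\PntSetA$. Since $\setC$ contains all of $\NNPk{\PntSet}{\pntA}{\ell}$, and each of these $\ell$ points lies within $\distPk{\PntSet}{\pntA}{\ell}$ of $\pntA$, the triangle inequality gives at least $\ell$ points of $\setC$ within distance $\dist{\pnt}{\pntA}+\distPk{\PntSet}{\pntA}{\ell}$ of $\pnt$, so $\distPk{\setC}{\pnt}{\ell}\le \dist{\pnt}{\pntA}+\distPk{\PntSet}{\pntA}{\ell}$. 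I would then apply \obsref{lipschitz} to replace $\distPk{\PntSet}{\pntA}{\ell}$ by $\distPk{\PntSet}{\pnt}{\ell}+\dist{\pnt}{\pntA}$, yielding the pointwise bound $\distPk{\setC}{\pnt}{\ell}\le 2\dist{\pnt}{\pntA}+\distPk{\PntSet}{\pnt}{\ell}$.

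Summing over $\pnt$, the first term totals twice the non-fault-tolerant $\qtnt$-median cost of $\PntSetA$, which is at most $2\const\smed$, where $\smed$ denotes the cost of an optimal $\qtnt$-median clustering and we use that $\kMedian$ is a $\const$-approximation. For the second term I would invoke monotonicity: because $\COPT\subseteq\PntSet$, any ball capturing $\ell$ points of $\COPT$ captures at least $\ell$ points of $\PntSet$, so $\distPk{\PntSet}{\pnt}{\ell}\le\distPk{\COPT}{\pnt}{\ell}=r_\pnt$ and hence $\sum_{\pnt}\distPk{\PntSet}{\pnt}{\ell}\le\sopt$. Together these give $\ccost{\PntSet}{\setC}\le 2\const\smed+\sopt$, so the entire proof reduces to the single inequality $\smed\le 2\sopt$, which I expect to be the crux.

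To establish $\smed\le 2\sopt$ I would exhibit one $\qtnt$-median solution of cost at most $2\sopt$. Process the points in nondecreasing order of $r_\pnt$, maintaining a set $W$ of chosen centers and a set of ``claimed'' centers of $\COPT$, both initially empty. When a point $\pnt$ is reached all of whose $\ell$ nearest centers $\NNPk{\COPT}{\pnt}{\ell}$ are still unclaimed, add $\pnt$ \emph{itself} to $W$ and mark all of $\NNPk{\COPT}{\pnt}{\ell}$ as claimed. Each such step claims $\ell$ fresh centers, so $\cardin{W}\le\floor{k/\ell}=\qtnt$. For the cost, a point added to $W$ pays $0$; otherwise some $\ctrA\in\NNPk{\COPT}{\pnt}{\ell}$ was already claimed by an earlier $\pnt'\in W$ with $\ctrA\in\NNPk{\COPT}{\pnt'}{\ell}$ and $r_{\pnt'}\le r_\pnt$, so $\dist{\pnt}{\pnt'}\le\dist{\pnt}{\ctrA}+\dist{\ctrA}{\pnt'}\le r_\pnt+r_{\pnt'}\le 2r_\pnt$, meaning $W$ serves $\pnt$ within $2r_\pnt$. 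Summing gives cost at most $2\sopt$, hence $\smed\le 2\sopt$.

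The main obstacle is precisely getting the factor $2$ rather than $3$ in this last lemma. If one kept an actual center of $\COPT$ as the representative, the charging path from $\pnt$ would run through both $\ctrA$ and $\pnt'$ and then to that center, costing $3r_\pnt$ and degrading the final ratio to $(1+6\const)$. The key idea that avoids this is to use the triggering point $\pnt'$ itself as the $\qtnt$-median center (legal since $\pnt'\in\PntSet$), shortening the charging path to the two edges $\pnt\to\ctrA\to\pnt'$. Combining the two parts, $\ccost{\PntSet}{\setC}\le 2\const\smed+\sopt\le 2\const\cdot 2\sopt+\sopt=(1+4\const)\sopt$, proving the theorem; the corollary then follows by taking $\kMedian$ to be the $(1+\sqrt{3}+\eps)$-approximation of Li and Svensson, since $1+4(1+\sqrt{3}+\eps)<12$ for small $\eps$.
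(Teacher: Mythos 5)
Your proof is correct, and the first half --- the pointwise bound $\distPk{\setC}{\pnt}{\ell} \leq 2\dist{\pnt}{\pntA}+\distPk{\PntSet}{\pnt}{\ell}$, summing to $\salg \leq \sopt + 2\const\smed$, and the final numeric check $1+4(1+\sqrt{3}+\eps) < 12$ --- is essentially identical to the paper's proof of \claimref{first} and of the corollary via \thmref{median}. Where you genuinely diverge is on the crux inequality $\smed \leq 2\sopt$ (\claimref{second}). The paper proves this by invoking a standalone structural result (\lemref{subset}): iteratively scoop out the minimum-radius ball containing $\ell$ points of the \emph{remaining} optimal centers, always re-minimizing the radius over what is left, and add the scooping point of $\PntSet$ to the candidate set. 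You instead sort the points of $\PntSet$ by $r_\pnt = \distPk{\COPT}{\pnt}{\ell}$ and greedily admit a point whenever its $\ell$ nearest optimal centers (computed once, in the full $\COPT$) are all unclaimed. The two constructions produce different center sets but the same guarantees: both yield at most $\floor{k/\ell}$ centers, and both get the factor $2$ (rather than $3$) by the same essential trick you highlight --- serving $\pnt$ by the \emph{triggering point} $\pnt' \in \PntSet$ rather than by a center of $\COPT$, so the charge is the two-hop path $\pnt \to \ctrA \to \pnt'$ with $\dist{\pnt}{\ctrA} \leq r_\pnt$ and $\dist{\ctrA}{\pnt'} \leq r_{\pnt'} \leq r_\pnt$. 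Your version is arguably more elementary (no re-computation of nearest neighbors in shrinking sets), and in fact it also yields the pointwise guarantee $\distPk{W}{\pnt}{1} \leq 2\distPk{\COPT}{\pnt}{\ell}$, so it could be stated with the same generality as \lemref{subset}; the paper's formulation buys a reusable lemma for arbitrary $\PntSetX$, $\PntSetY$, and $h$, which it advertises as being of independent interest and reuses implicitly in the $k$-center analysis.
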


\begin{proof}
  We use the $(1 + \sqrt{3} + \eps)$-approximation algorithm of 
  Li and Svennson \cite{ls-akmpa-13} with a small enough $\eps$, 
  for the subroutine $\kMedian(\PntSet,\qtnt)$. The result
  follows by appealing to \thmref{median}.
\end{proof}

%\subsection{Analysis}
%\seclab{analysis}

\subsubsection*{Proof of \thmref{median}}
We refer the reader to \secref{prelims} for notation already introduced. We need some
more notation. For a given instance of $\FTM(\PntSet,k,\ell)$, 
let $\setC^* = \{w_1, w_2, \ldots, w_{k}\}$ be an optimal set of centers, 
and let $\sopt$ be its cost, i.e, 
$\sopt = \sum_{\pnt\in \PntSet} \distPk{\setC^*}{\pnt}{\ell}$.  Let 
$\setC = \{c_1, \dots, c_k\}$ be the set of centers returned by 
$\fMedian(\PntSet,k,\ell)$, and $\salg$ its cost. 

Let $\qtnt = \floor{k/\ell}$, and let $\smed$ denote the cost of the 
optimum $\qtnt$-median clustering on $\PntSet$, i.e., the optimum for the 
problem $\FTM(\PntSet,\qtnt,1)$.  When $\fMedian(\PntSet,k,\ell)$ is run, 
it makes a subroutine call to $\kMedian(\PntSet, \qtnt)$.  
Let $\PntSetA = \{\pntA_1,\dots, \pntA_{\qtnt}\}$ be the set of centers returned 
by this subroutine call.  We know that $\PntSetA$ is a 
$\const$-approximation to the optimal solution to $\FTC(\PntSet,\qtnt, 1)$. 
%where the value of the constant $\const$ depends on which algorithm was chosen for $\kCenter$.  

Notice that, $\setC$ includes $\bigcup_{i=1}^\qtnt \NNPk{\PntSet}{\pntA_i}{\ell}$. 
We assume that the set $\setC$ has exactly
$k$ points. As mentioned earlier, we only require that $\setC$ includes 
$\bigcup_{i=1}^\qtnt \NNPk{\PntSet}{\pntA_i}{\ell}$ in our analysis,
and if $\cardin{\bigcup_{i=1}^\qtnt \NNPk{\PntSet}{\pntA_i}{\ell}} < k$,
we can always add additional points. This can only decrease the cost of
clustering. 

Proving the following two claims will immediately imply $\salg \leq (1+4\const)\sopt$.

\begin{claim}
\claimlab{first}
 We have that, $\salg \leq \sopt + 2\const\smed$.
\end{claim}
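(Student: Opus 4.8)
The plan is to establish the bound pointwise --- derive an inequality on $\distPk{\setC}{\pnt}{\ell}$ for each individual point $\pnt$ --- and then sum over all of $\PntSet$. Fix a point $\pnt \in \PntSet$ and let $\pntA = \nnPk{\PntSetA}{\pnt}{1}$ be its nearest center among the $\qtnt$-median centers $\PntSetA$, so that $\dist{\pnt}{\pntA} = \distPk{\PntSetA}{\pnt}{1}$. First I would use the fact that, by construction, $\setC$ contains the entire set $\NNPk{\PntSet}{\pntA}{\ell}$ of $\ell$ points; hence the $\ell$-th nearest center of $\pnt$ in $\setC$ can be no farther than the most distant of these $\ell$ neighbors, and routing through $\pntA$ by the triangle inequality gives
\[
  \distPk{\setC}{\pnt}{\ell} \leq \dist{\pnt}{\pntA} + \distPk{\PntSet}{\pntA}{\ell}.
\]

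The core of the argument is to charge the blow-up radius $\distPk{\PntSet}{\pntA}{\ell}$ to the optimal fault-tolerant cost, and I would do this in two moves. Since $\setC^* \subseteq \PntSet$, any $\ell$ centers of $\setC^*$ lying close to $\pntA$ are in particular $\ell$ points of $\PntSet$, so the $\ell$-th nearest-neighbor radius can only shrink when measured against the larger set: $\distPk{\PntSet}{\pntA}{\ell} \leq \distPk{\setC^*}{\pntA}{\ell}$. Then I would transport this radius from the representative center $\pntA$ back to the point $\pnt$ using the $1$-Lipschitz property of $\distPk{\setC^*}{\cdot}{\ell}$ from \obsref{lipschitz}, giving $\distPk{\setC^*}{\pntA}{\ell} \leq \distPk{\setC^*}{\pnt}{\ell} + \dist{\pnt}{\pntA}$. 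Chaining these into the displayed inequality yields the clean pointwise estimate
\[
  \distPk{\setC}{\pnt}{\ell} \leq \distPk{\setC^*}{\pnt}{\ell} + 2\dist{\pnt}{\pntA}.
\]

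Finally I would sum this inequality over all $\pnt \in \PntSet$. The first term on the right sums to exactly $\sopt = \sum_{\pnt \in \PntSet} \distPk{\setC^*}{\pnt}{\ell}$, while the second sums to $2\sum_{\pnt \in \PntSet}\dist{\pnt}{\pntA} = 2\sum_{\pnt \in \PntSet}\distPk{\PntSetA}{\pnt}{1}$, which is exactly twice the non-fault-tolerant $\qtnt$-median cost of the center set $\PntSetA$. Because $\PntSetA$ is a $\const$-approximation to the optimal $\qtnt$-median clustering, whose cost is $\smed$, this last quantity is at most $2\const\smed$, and the claim $\salg \leq \sopt + 2\const\smed$ follows.

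The step I expect to be the main obstacle is making the transfer to the optimum precise, since the triangle inequality in the first step only relates $\setC$ to the representatives $\PntSetA$ and says nothing about $\setC^*$ directly. The factor of $2$ on the median cost is exactly the price of first having to move the $\ell$-th nearest-neighbor radius from $\pntA$ back to $\pnt$ before it can be compared against $\sopt$. The enabling observation that makes the whole chain go through --- and which is easy to overlook --- is the containment $\setC^* \subseteq \PntSet$, which is precisely what lets $\distPk{\PntSet}{\pntA}{\ell}$ be dominated by the $\ell$-th nearest-center radius $\distPk{\setC^*}{\pntA}{\ell}$.
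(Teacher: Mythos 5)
Your proof is correct and follows essentially the same route as the paper's: both derive the pointwise bound $\distPk{\setC}{\pnt}{\ell} \leq 2\dist{\pnt}{\pntA} + \distPk{\setC^*}{\pnt}{\ell}$ by routing through the representative $\pntA$, using the containment $\NNPk{\PntSet}{\pntA}{\ell} \subseteq \setC$, the $1$-Lipschitz property, and $\setC^* \subseteq \PntSet$, and then sum over $\PntSet$. The only (immaterial) difference is ordering --- the paper applies the Lipschitz transfer to $\distPk{\PntSet}{\cdot}{\ell}$ and compares to $\setC^*$ at $\pnt$ after summing, whereas you compare to $\setC^*$ at $\pntA$ first and then transfer; the two moves commute.
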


\begin{claim}
\claimlab{second}
 We have that, $\smed \leq 2\sopt$.
\end{claim}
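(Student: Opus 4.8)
The plan is to prove \claimref{second} by exhibiting a single candidate set $T \subseteq \PntSet$ with $\cardin{T} \le \qtnt$ whose clustering cost $\sum_{\pnt \in \PntSet}\dist{\pnt}{T}$ is at most $2\sopt$; since $\smed$ is the minimum of this cost over all $\qtnt$-element center sets (and having fewer than $\qtnt$ centers only increases the minimum over exactly $\qtnt$ centers), this immediately yields $\smed \le 2\sopt$. Throughout I would work with the optimal fault-tolerant solution $\setC^* = \{w_1,\dots,w_k\}$, and for each point $\pnt \in \PntSet$ abbreviate its fault-tolerant cost by $d_{\pnt} = \distPk{\setC^*}{\pnt}{\ell}$ and its set of $\ell$ nearest centers by $B_{\pnt} = \NNPk{\setC^*}{\pnt}{\ell}$, so that every center of $B_{\pnt}$ lies within distance $d_{\pnt}$ of $\pnt$, and $\sopt = \sum_{\pnt \in \PntSet} d_{\pnt}$.

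The construction I would use is a greedy selection of \emph{leaders}. First I would sort the points of $\PntSet$ in nondecreasing order of $d_{\pnt}$ and scan them in that order, declaring a point $\pnt$ a leader exactly when $B_{\pnt}$ is disjoint from $B_{\pnt'}$ for every previously chosen leader $\pnt'$. I then let $T$ be the set of all leaders. The counting step is that the sets $B_{\pnt'}$ over leaders $\pnt'$ are pairwise disjoint by construction, each has exactly $\ell$ elements (well-defined since $\ell \le k = \cardin{\setC^*}$), and all lie inside the $k$-element set $\setC^*$; hence the number of leaders times $\ell$ is at most $k$, so $\cardin{T} \le \floor{k/\ell} = \qtnt$.

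The distance step is where the factor $2$ arises, and it is the part I would be most careful about. Take any $\pnt \in \PntSet$. If $\pnt$ is itself a leader then $\dist{\pnt}{T} = 0$. Otherwise $\pnt$ failed the leader test, so $B_{\pnt}$ meets $B_{\pnt'}$ for some leader $\pnt'$ scanned strictly earlier; fix a shared center $w \in B_{\pnt} \cap B_{\pnt'}$, and note $d_{\pnt'} \le d_{\pnt}$ by the ordering. The key idea is to serve $\pnt$ by the leader point $\pnt'$ itself (which is a legal median center, being a member of $\PntSet$) rather than by any center of $\setC^*$: the triangle inequality through $w$ gives $\dist{\pnt}{\pnt'} \le \dist{\pnt}{w} + \dist{w}{\pnt'} \le d_{\pnt} + d_{\pnt'} \le 2 d_{\pnt}$, hence $\dist{\pnt}{T} \le 2 d_{\pnt}$. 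Summing over all points yields $\sum_{\pnt \in \PntSet}\dist{\pnt}{T} \le 2\sum_{\pnt \in \PntSet} d_{\pnt} = 2\sopt$, completing the argument.

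I expect the main obstacle to be conceptual rather than computational: the naive variant that keeps a representative center from each leader ball and routes $\pnt$ through both $w$ and $\pnt'$ loses a factor of $3$, while keeping all $\ell$ centers of each leader ball would blow the budget up to $\qtnt\ell$ centers. Recognizing that the leader points serve as their own representatives is exactly what collapses the bound to $2$ while keeping the cardinality at $\qtnt$. The remaining details—the well-definedness of $B_{\pnt}$ and the handling of distance ties via a fixed scan order (which only needs $d_{\pnt'}\le d_{\pnt}$ with non-strict inequality)—are routine.
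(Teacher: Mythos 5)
Your proof is correct, and it follows the same high-level strategy as the paper---exhibiting an explicit witness set $T \subseteq \PntSet$ of at most $\qtnt$ points whose nearest-center cost is at most $2\sopt$, so that $\smed \leq 2\sopt$---but the construction of the witness set is genuinely different. The paper isolates this step as \lemref{subset}, a standalone general statement proved by an adaptive ``scooping'' procedure: it runs for exactly $\floor{k/\ell}$ rounds, in each round selecting the point of $\PntSet$ whose smallest ball containing $\ell$ of the \emph{remaining} optimal centers has minimum radius, then deleting those $\ell$ centers. There the cardinality bound is automatic and the work goes into coverage, namely showing that the first scooped set meeting $\ball{\pnt}{d_{\pnt}}$ was scooped with radius at most $d_{\pnt}$ (because all $\ell$ centers of that ball were still available at that round). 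Your construction is the classical filtering argument familiar from the $k$-median LP-rounding literature: sort by the fixed radii $d_{\pnt}$, greedily keep points whose $\ell$-nearest-center sets in $\setC^*$ are pairwise disjoint; now coverage is automatic (a rejected point shares an optimal center with an earlier, hence smaller-radius, leader) and the work goes into the packing bound $\cardin{T}\,\ell \leq k$. Both routes obtain the factor $2$ the same way, by routing through a shared optimal center via the triangle inequality together with the monotonicity of the radii. What the paper's version buys is a reusable lemma stated for arbitrary $\PntSetX$, $\PntSetY$, and $h$ (which the authors flag as interesting in its own right); what yours buys is a more self-contained argument that avoids recomputing radii with respect to a shrinking center set, and it would generalize to the same lemma just as readily.
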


\begin{proofof}{\claimref{first}}
Let $\pnt \in \PntSet$, and let $\pntA = \nnPk{\PntSetA}{\pnt}{1}$. 
By \obsref{lipschitz}, 
$\distPk{\setC}{\pnt}{\ell} \leq \dist{\pnt}{\pntA} + \distPk{\setC}{\pntA}{\ell}$.
As $\NNPk{\PntSet}{\pntA}{\ell} \subseteq \setC$, we have that
$\distPk{\PntSet}{\pntA}{\ell} = \distPk{\setC}{\pntA}{\ell}$.
Again by \obsref{lipschitz}, 
$\distPk{\PntSet}{\pntA}{\ell} \leq \dist{\pntA}{\pnt} + \distPk{\PntSet}{\pnt}{\ell}$. 
Combining the two inequalities gives,
$\distPk{\setC}{\pnt}{\ell} \leq 2\dist{\pnt}{\pntA}+\distPk{\PntSet}{\pnt}{\ell}
= 2\distPk{\PntSetA}{\pnt}{1}+ \distPk{\PntSet}{\pnt}{\ell}$.
Thus,
\ConfVer{
\begin{align}
   \eqlab{eq1}
   \begin{split}
     \salg &= \sum_{\pnt\in \PntSet} \distPk{\setC}{\pnt}{\ell}\leq \sum_{\pnt\in \PntSet} \pth{2\distPk{\PntSetA}{\pnt}{1}+ 
     \distPk{\PntSet}{\pnt}{\ell}} \\
       &\leq 2c\smed+ \sopt,
   \end{split}
\end{align}
}
\FullVer{
\begin{align}
   \eqlab{eq1}
     \salg = \sum_{\pnt\in \PntSet} \distPk{\setC}{\pnt}{\ell}\leq \sum_{\pnt\in \PntSet} \pth{2\distPk{\PntSetA}{\pnt}{1}+ 
     \distPk{\PntSet}{\pnt}{\ell}}
       \leq 2c\smed+ \sopt,
\end{align}
}
as $\PntSetA$ is a $\const$-approximate $\qtnt$-median solution,
$\distPk{\PntSet}{\pnt}{\ell} \leq \distPk{\setC^*}{\pnt}{\ell}$, and
$\sopt = \sum_{\pnt \in \PntSet} \distPk{\setC^*}{\pnt}{\ell}$. 
\end{proofof}

The following is required to prove \claimref{second}, but is interesting in its own right.
\begin{lemma}
\lemlab{subset}
Let $\mtrA$ be any metric space. Let 
$\PntSetX \subseteq \mtrA$ with $\cardin{\PntSetX} = t$. Then
for any integer $1 \leq h \leq t$, and any finite set
$\PntSetY \subseteq \mtrA$,
there exists a subset $\PntSetB \subseteq \PntSetY$, such that 
\begin{inparaenum}[(A)]
 \item
 $\cardin{\PntSetB} \leq t/h$, and,
 \item
 $\forall \pntY \in \PntSetY$, $\distPk{\PntSetB}{\pntY}{1} \leq 2\distPk{\PntSetX}{\pntY}{h}$.
\end{inparaenum}
\end{lemma}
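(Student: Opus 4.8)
The plan is to construct the set $\PntSetB$ by a greedy covering procedure driven by the radii $\distPk{\PntSetX}{\pntY}{h}$, and then to verify both properties via a packing argument inside $\PntSetX$. For each $\pntY \in \PntSetY$ write $r_{\pntY} = \distPk{\PntSetX}{\pntY}{h}$, the radius of the smallest ball centered at $\pntY$ containing at least $h$ points of $\PntSetX$ (well-defined since $h \le t = \cardin{\PntSetX}$); by definition $\ball{\pntY}{r_{\pntY}}$ contains at least $h$ points of $\PntSetX$. First I would sort the finitely many points of $\PntSetY$ in nondecreasing order of $r_{\pntY}$, breaking ties arbitrarily, and process them in this order while maintaining a set $\PntSetB$ that is initially empty. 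When processing $\pntY$, add it to $\PntSetB$ if and only if its current nearest-neighbor distance to $\PntSetB$ exceeds $2 r_{\pntY}$, where an empty $\PntSetB$ is treated as giving distance $+\infty$; in particular the first point is always added, so $\PntSetB$ is nonempty whenever $\PntSetY$ is.

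Property (B) should then be immediate from the stopping rule. If $\pntY$ was added then $\pntY \in \PntSetB$ and $\distPk{\PntSetB}{\pntY}{1} = 0 \le 2 r_{\pntY}$; if $\pntY$ was not added then at the moment it was processed some already-chosen center $s \in \PntSetB$ satisfied $\dist{\pntY}{s} \le 2 r_{\pntY}$, and since $\PntSetB$ only grows thereafter, $\distPk{\PntSetB}{\pntY}{1} \le 2 r_{\pntY}$ holds at the end as well. This gives exactly $\distPk{\PntSetB}{\pntY}{1} \le 2\distPk{\PntSetX}{\pntY}{h}$ for every $\pntY \in \PntSetY$.

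The crux — and the step I expect to be the main obstacle — is the size bound (A), which I would obtain by showing that the balls $\ball{s}{r_s}$ for $s \in \PntSetB$ are pairwise disjoint. Take two distinct chosen centers and let $s$ be the one processed earlier, so $r_s \le r_{s'}$ by the sorting. When $s'$ was added, $s$ was already present in $\PntSetB$, so the stopping rule forces $\dist{s}{s'} > 2 r_{s'} \ge r_s + r_{s'}$. Hence no point can lie in both $\ball{s}{r_s}$ and $\ball{s'}{r_{s'}}$, since by the triangle inequality such a point $x$ would give $\dist{s}{s'} \le \dist{s}{x} + \dist{x}{s'} \le r_s + r_{s'}$, a contradiction. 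The balls being disjoint and each containing at least $h$ points of $\PntSetX$, we count $h \cdot \cardin{\PntSetB} \le \cardin{\PntSetX} = t$, which yields $\cardin{\PntSetB} \le t/h$, establishing (A).

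The one subtlety to be careful about is that the packing argument must use the very same radii $r_s$ that govern the greedy acceptance rule, and that processing in \emph{nondecreasing} order of $r_{\pntY}$ is what guarantees the earlier center has the smaller radius; this is precisely what makes $2 r_{s'} \ge r_s + r_{s'}$ hold, and it is the only place where the ordering is used. With these observations the two claimed properties follow, completing the proof.
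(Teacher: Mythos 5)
Your proof is correct, but it takes a genuinely different route from the paper's. You build $\PntSetB$ by a greedy net on $\PntSetY$: sort by the radii $r_{\pntY} = \distPk{\PntSetX}{\pntY}{h}$, accept a point only if it is farther than $2r_{\pntY}$ from everything accepted so far, get property (B) for free from the acceptance rule, and get property (A) by showing the balls $\ball{s}{r_s}$ for $s \in \PntSetB$ are pairwise disjoint (here the nondecreasing processing order is exactly what turns $\dist{s}{s'} > 2r_{s'}$ into $\dist{s}{s'} > r_s + r_{s'}$), so that packing $h$ points of $\PntSetX$ into each of them forces $\cardin{\PntSetB} \leq t/h$. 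The paper instead works by repeatedly ``scooping'' points out of $\PntSetX$: at each step it picks the point of $\PntSetY$ whose $h$-th nearest neighbor radius in the \emph{surviving} portion of $\PntSetX$ is smallest, removes those $h$ points, and iterates $\floor{t/h}$ times; the size bound is then immediate by construction, and the factor $2$ comes from observing that for any $\pntY$, the first iteration whose scooped set meets $\ball{\pntY}{\distPk{\PntSetX}{\pntY}{h}}$ must have used a radius no larger than $\distPk{\PntSetX}{\pntY}{h}$, after which one triangle inequality finishes. The trade-off is which property each construction makes trivial: yours makes coverage (B) automatic and pushes the work into the packing/disjointness argument for (A), while the paper's makes the cardinality bound (A) automatic and pushes the work into the coverage argument for (B). Both are complete and yield the same constants; yours is the more standard ``greedy ball packing'' pattern and arguably the more portable one, while the paper's scooping construction has the minor advantage of producing at most $\floor{t/h}$ centers directly without a separate counting step.
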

\begin{proof}
We give an algorithm to construct such a subset $\PntSetB \subseteq \PntSetY$.  This subset 
is constructed by iteratively scooping out the points of the minimum radius ball containing 
$h$ points from $\PntSetX$, adding the center to $\PntSetB$, and repeating.  Formally, let 
$\PntSetC_0 = \emptyset$, and for $i=1,\dots,\floor{t/h}$, define iteratively, 
$\PntSetX_i = \PntSetX \setminus \pth{\bigcup_{j=0}^{i-1} \PntSetC_j}$, 
$\pntY_i = \arg \min_{\pntB \in \PntSetY} \distPk{\PntSetX_i}{\pntB}{h}$, and, 
$\PntSetC_i = \NNPk{\PntSetX_i}{\pntY_i}{h}$.
We prove that $\PntSetB = \bigcup_{i=1}^{\floor{t/h}} \brc{\pntY_i}$, 
is the desired subset of points.
 
First, clearly $\cardin{\PntSetB}\leq t/h$. Let $\pntY \in \PntSetY$, 
and let $\ballA =\ball{\pntY}{\num}$, where $\num=\distPk{\PntSetX}{\pntY}{h}$.  Let 
$\PntSetC_i$ be the first subset, i.e. the one with smallest index $i$, 
such that there exists some point $\pntC \in \ballA \cap \PntSetC_i$. Such a point must exist, 
since fewer than $h$ points are in 
$\PntSetX \setminus \pth{\bigcup_{j=1}^{\floor{t/h}} \PntSetC_j}$, while
$\cardin{\ballA \cap \PntSetX} \geq h$. Clearly 
$\ballA \cap \PntSetX \subseteq \PntSetX_i$, as $i$ is the minimum index such that 
$\ballA \cap \PntSetC_i \neq \emptyset$. As such we have, 
$\distPk{\PntSetX}{\pntY}{h} = \distPk{\PntSetX_i}{\pntY}{h}$. 
Let $r_i = \distPk{\PntSetX_i}{\pntY_i}{h}$, 
be the radius of the ball that scooped out $\PntSetC_i$.  Clearly $r_i \leq \num$, as
\[
  \num 
  = 
  \distPk{\PntSetX}{\pntY}{h} 
  =
  \distPk{\PntSetX_i}{\pntY}{h} 
  \geq 
  r_i 
  = 
  \arg \min_{\pntB \in \PntSetY} \distPk{\PntSetX_i}{\pntB}{h}.
\]
Now, since $\pntC \in \ballA \cap \PntSetC_i$, 
$\dist{\pntY_i}{\pntC} \leq r_i = \distPk{\PntSetX_i}{\pntY_i}{h}$. 
By the triangle inequality,
\FullVer{
 \[
 \distPk{\PntSetB}{\pntY}{1}\leq \dist{\pntY}{\pntY_i} \leq \dist{\pntY}{\pntC} + \dist{\pntC}{\pntY_i} \leq \num +r_i \leq 2\num = 2\distPk{\PntSetX}{\pntY}{h}.
 \]
}
\ConfVer{
  \begin{align*}
    \distPk{\PntSetB}{\pntY}{1} & \leq \dist{\pntY}{\pntY_i} 
    \leq \dist{\pntY}{\pntC} + \dist{\pntC}{\pntY_i} \leq \num +r_i 
    \leq 2\num \\
    &= 2\distPk{\PntSetX}{\pntY}{h}.
  \end{align*}
}
 \end{proof}
 
\begin{proofof}{\claimref{second}}
  We use \lemref{subset} with $\PntSetY = \PntSet$, 
  $\PntSetX = \setC^*$, $t = \cardin{\setC^*} = k$
  and $h = \ell$. 
  Let $\PntSetB$ be the subset of $\PntSet$ guaranteed by \lemref{subset}. 
  Now $\cardin{\PntSetB} \leq k / \ell$, and as
  such $\cardin{\PntSetB} \leq \qtnt$.
  We have,
  \begin{align}
    \eqlab{eq2}
    \smed 
    \leq 
    \sum_{\pnt\in \PntSet} \distPk{\PntSetB}{\pnt}{1} 
    \leq 
    \sum_{\pnt\in \PntSet} 2\distPk{\setC^*}{\pnt}{\ell}
    = 
    2\sopt.
  \end{align}

  The first inequality follows since 
  $\smed$ is the cost of the optimum $\qtnt$-median clustering of $\PntSet$, 
  while $\sum_{\pnt \in \PntSet} \distPk{\PntSetB}{\pnt}{1}$ is the cost of
  a $\cardin{\PntSetB}$-median clustering of $\PntSet$ by the set of centers
  $\PntSetB \subseteq \PntSet$ with $\cardin{\PntSetB} \leq \qtnt$. 
  The second inequality follows from \lemref{subset}.
 \end{proofof}
 This concludes the proof of \thmref{median}.

%%%%%%%%%%%%%%%%%%%%%%%%%%%%%%%%%%%%%%%%%%%%%%%%%%%%%%%%%%%%%%%%%%%%
%%%%%%%%%%%%%%%%%%%%%%%%%%%%%%%%%%%%%%%%%%%%%%%%%%%%%%%%%%%%%%%%%%%%
\subsection{Analysis for fault-tolerant $k$-center}

We now present the analogues result to \thmref{median} 
for fault-tolerant $k$-center.
%but for fault-tolerant $k$-center instead 
%of fault-tolerant $k$-median.
By following the proof nearly verbatim from the previous section one sees that similar to 
$\fMedian(\PntSet,k,\ell)$, $\fCenter(\PntSet,k,\ell)$ also provides a $(1+4\const)$-approximation.
However, in this case we will actually get a $(1+2\const)$-approximation, since now an improved 
and simpler version of \claimref{first} holds.

As a quick note on notation, here $\ralg$, $\ropt$, and $\rcen$ will play the analogues role for center 
as $\salg$, $\sopt$, and $\smed$ played for median.

\begin{claim}
\claimlab{firstImproved}
 We have that, $\ralg \leq \ropt + 2\const\rcen$.
\end{claim}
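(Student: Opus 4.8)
The plan is to mirror the proof of \claimref{first}, but exploit the fact that for $k$-center the cost is a maximum rather than a sum, which lets me replace the per-point inequality chain with one applied only at the single worst point. First I would recall that $\ralg = \ccost{\PntSet}{\setC} = \max_{\pnt \in \PntSet} \distPk{\setC}{\pnt}{\ell}$, and let $\pnt \in \PntSet$ be a point attaining this maximum. Exactly as in \claimref{first}, set $\pntA = \nnPk{\PntSetA}{\pnt}{1}$, so that $\distPk{\PntSetA}{\pnt}{1} = \dist{\pnt}{\pntA}$. The chain of two applications of \obsref{lipschitz}, together with the identity $\distPk{\PntSet}{\pntA}{\ell} = \distPk{\setC}{\pntA}{\ell}$ (which holds because $\NNPk{\PntSet}{\pntA}{\ell} \subseteq \setC$), yields the same pointwise bound
\[
  \distPk{\setC}{\pnt}{\ell} \leq 2\distPk{\PntSetA}{\pnt}{1} + \distPk{\PntSet}{\pnt}{\ell}.
\]

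Now comes the simplification that improves the constant from $4\const$ to $2\const$. In the median case the sum $\sum_\pnt 2\distPk{\PntSetA}{\pnt}{1}$ had to be bounded by $2\const\smed$ by invoking optimality of $\PntSetA$ over the whole cost. Here, because I am only looking at the single maximizing point $\pnt$, I bound each of the two terms by a maximum. For the first term, $\distPk{\PntSetA}{\pnt}{1} \leq \max_{\pnt' \in \PntSet}\distPk{\PntSetA}{\pnt'}{1} = \ccost{\PntSet}{\PntSetA}$, which is the non-fault-tolerant $\qtnt$-center cost of the set $\PntSetA$ returned by the subroutine; since $\PntSetA$ is a $\const$-approximation to $\FTC(\PntSet,\qtnt,1)$, this is at most $\const\rcen$, where $\rcen$ denotes the optimal $\qtnt$-center cost. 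For the second term, $\distPk{\PntSet}{\pnt}{\ell} \leq \distPk{\COPT}{\pnt}{\ell} \leq \max_{\pnt'\in\PntSet}\distPk{\COPT}{\pnt'}{\ell} = \ropt$, using that $\PntSet \supseteq \COPT$ makes the $\ell$th-nearest-neighbor radius within all of $\PntSet$ no larger than within $\COPT$. Combining gives $\ralg \leq 2\const\rcen + \ropt$, as claimed.

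The key structural point — and the reason this is genuinely simpler than the median analysis rather than just a transcription — is that the maximum operator commutes cleanly with the pointwise bound, so I never need to relate a sum over all points to the aggregate optimum. The analogue of \claimref{second}, namely $\rcen \leq 2\ropt$, will presumably be proved separately (again via \lemref{subset} applied with $\PntSetX = \COPT$, $t = k$, $h = \ell$, taking a maximum instead of a sum), and together the two claims will give the final $(1+2\const)$-approximation. I expect no serious obstacle here; the only point requiring care is to state precisely what $\rcen$ denotes and to make sure the $\const$-approximation guarantee of the subroutine is invoked against the $\max$-cost $\ccost{\PntSet}{\PntSetA}$ rather than against any fault-tolerant quantity, since the subroutine solves the non-fault-tolerant ($\ell=1$) $\qtnt$-center problem.
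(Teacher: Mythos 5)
Your derivation does establish the inequality $\ralg \leq \ropt + 2\const\rcen$ as literally stated, but it is not the paper's argument, and more importantly it forfeits exactly the improvement this claim exists to provide. You transcribe the two-step Lipschitz chain from \claimref{first} verbatim, arriving at the pointwise bound $\distPk{\setC}{\pnt}{\ell} \leq 2\distPk{\PntSetA}{\pnt}{1} + \distPk{\PntSet}{\pnt}{\ell}$, and then observe that replacing the sum by a max lets you bound each term separately. But the median argument bounds each term separately too; the max-versus-sum distinction is not where the factor of $2$ is saved, and your version, combined with the center analogue of \claimref{second} (namely $\rcen \leq 2\ropt$, via \lemref{subset}), yields only $\ralg \leq \ropt + 2\const\cdot 2\ropt = (1+4\const)\ropt$ --- not the $(1+2\const)$ bound of \thmref{center:main} that you announce as your goal.

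The actual simplification in the paper is to stop after \emph{one} application of \obsref{lipschitz}: with $\pntA = \nnPk{\PntSetA}{\pnt}{1}$ one has $\distPk{\setC}{\pnt}{\ell} \leq \dist{\pnt}{\pntA} + \distPk{\setC}{\pntA}{\ell} = \dist{\pnt}{\pntA} + \distPk{\PntSet}{\pntA}{\ell}$, and then the second term is bounded \emph{at the point $\pntA$} by $\distPk{\PntSet}{\pntA}{\ell} \leq \distPk{\setC^*}{\pntA}{\ell} \leq \ropt$, since $\ropt$ is a uniform bound valid at every point of $\PntSet$, including $\pntA$. This is precisely the step that is unavailable in the median setting: there, many points $\pnt$ share the same $\pntA$, so $\sum_{\pnt}\distPk{\PntSet}{\pntA(\pnt)}{\ell}$ cannot be charged to $\sopt$, which forces the second Lipschitz application and hence the coefficient $2$ on $\distPk{\PntSetA}{\pnt}{1}$. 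The one-step version gives the stronger $\ralg \leq \ropt + \const\rcen$, which is what the $(1+2\const)$ guarantee actually needs. (The factor $2$ appearing in the statement of \claimref{firstImproved} and in the displayed bound of the paper appears to be carried over from the median case; the derivation itself supports the coefficient $\const$, and that is the form used downstream.) So the missing idea is not a computation but the observation that for the max objective one may evaluate the $\ell$-th nearest-neighbor radius at the proxy center $\pntA$ rather than pulling it back to $\pnt$.
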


\begin{proofof}{\claimref{firstImproved}}
Let $\pnt \in \PntSet$, and let $\pntA = \nnPk{\PntSetA}{\pnt}{1}$. 
By \obsref{lipschitz}, 
$\distPk{\setC}{\pnt}{\ell} \leq \dist{\pnt}{\pntA} + \distPk{\setC}{\pntA}{\ell}
= \dist{\pnt}{\pntA} + \distPk{\PntSet}{\pntA}{\ell}$, where the equality follows since 
$\NNPk{\PntSet}{\pntA}{\ell} \subseteq \setC$.
Thus,
\begin{align}
   \eqlab{eq3}
   \begin{split}
     \ralg &= \max_{\pnt\in \PntSet} \distPk{\setC}{\pnt}{\ell}\leq \max_{\pnt\in \PntSet} \pth{\distPk{\PntSetA}{\pnt}{1}+ 
     \distPk{\PntSet}{\pntA}{\ell}} \\
       &\leq 2c\rcen+ \ropt,
   \end{split}
\end{align}
as $\PntSetA$ is a $\const$-approximate $\qtnt$-center solution,
$\distPk{\PntSet}{\pntA}{\ell} \leq \distPk{\setC^*}{\pntA}{\ell}$, and
$\ropt = \max_{\pnt \in \PntSet} \distPk{\setC^*}{\pnt}{\ell}$. 
\end{proofof}

\begin{theorem}
\thmlab{center:main}
  For a given point set $\PntSet$ in a metric space $\mtrA$ with $\cardin{\PntSet} = n$, 
  the algorithm $\fCenter(\PntSet,k,\ell)$ achieves a $(1+2\const)$-approximation to the 
  optimal solution of $\FTC(\PntSet,k,\ell)$, 
  where $\const$ is the approximation guarantee of the 
  subroutine $\kCenter(\PntSet,\qtnt)$, where $\qtnt = \floor{k / \ell}$.
\end{theorem}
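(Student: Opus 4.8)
The plan is to reuse the two-step template of \thmref{median}: combine a bound on $\ralg$ in terms of $\ropt$ and $\rcen$ with a bound on $\rcen$ in terms of $\ropt$, and then multiply out the constants. The first bound is provided by \claimref{firstImproved}, which is already established, so the only remaining work is to prove the center analogue of \claimref{second}, namely $\rcen \leq 2\ropt$, followed by a check that the constants combine to $(1+2\const)$.

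It is worth isolating why the center case improves on the median case, since this is essentially what the theorem is claiming. Writing $\pntA = \nnPk{\PntSetA}{\pnt}{1}$, \claimref{firstImproved} bounds each point's cost by $\distPk{\setC}{\pnt}{\ell} \leq \distPk{\PntSetA}{\pnt}{1} + \distPk{\PntSet}{\pntA}{\ell}$, and because the objective is a maximum we may split $\max_{\pnt}\pth{\distPk{\PntSetA}{\pnt}{1} + \distPk{\PntSet}{\pntA}{\ell}}$ into a sum of two independent maxima. The first is exactly the non-fault-tolerant $\qtnt$-center cost of $\PntSetA$, hence at most $\const\rcen$, and the second is at most $\ropt$ because $\pntA \in \PntSet$ gives $\distPk{\PntSet}{\pntA}{\ell} \leq \distPk{\setC^*}{\pntA}{\ell} \leq \ropt$. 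This yields $\ralg \leq \ropt + \const\rcen$ without the second application of \obsref{lipschitz} that the median proof required; it is precisely that extra step, which doubled the $\distPk{\PntSetA}{\pnt}{1}$ term for the sum objective, that we avoid here.

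For the remaining ingredient I would apply \lemref{subset} exactly as in \claimref{second}, with $\PntSetY = \PntSet$, $\PntSetX = \setC^*$, $t = \cardin{\setC^*} = k$, and $h = \ell$, producing a set $\PntSetB \subseteq \PntSet$ with $\cardin{\PntSetB} \leq k/\ell \leq \qtnt$ and $\distPk{\PntSetB}{\pnt}{1} \leq 2\distPk{\setC^*}{\pnt}{\ell}$ for every $\pnt$. Since $\rcen$ is the optimal $\qtnt$-center cost and $\PntSetB$ is a feasible center set of size at most $\qtnt$, taking a maximum over $\PntSet$ in place of the median sum gives $\rcen \leq \max_{\pnt} \distPk{\PntSetB}{\pnt}{1} \leq 2\max_{\pnt}\distPk{\setC^*}{\pnt}{\ell} = 2\ropt$. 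Substituting into the first bound yields $\ralg \leq \ropt + 2\const\ropt = (1+2\const)\ropt$, as claimed.

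The main obstacle is already absorbed into \lemref{subset}: its ball-scooping construction is what converts a fault-tolerant $k$-center optimum into an ordinary $\qtnt$-center solution at the cost of a factor $2$ and an $\ell$-fold reduction in the number of centers. Given that lemma, the center analysis is routine; the only points that need care are verifying that each inequality previously written with $\sum_{\pnt}$ survives the replacement by $\max_{\pnt}$, and keeping the single $\const\rcen$ term (rather than a doubled one) so that the final ratio is $(1+2\const)$ and not $(1+4\const)$.
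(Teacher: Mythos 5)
Your proposal is correct and follows essentially the same route as the paper, which proves \thmref{center:main} by rerunning the proof of \thmref{median} with $\max_{\pnt\in\PntSet}$ in place of $\sum_{\pnt\in\PntSet}$ (giving $\rcen \leq 2\ropt$ via \lemref{subset}) and substituting \claimref{firstImproved} for \claimref{first}. One point in your favor: your sharper form of the first bound, $\ralg \leq \ropt + \const\rcen$, obtained from the single application of \obsref{lipschitz}, is exactly what is needed to reach $(1+2\const)\ropt$ after plugging in $\rcen \leq 2\ropt$ --- the paper's own statement of \claimref{firstImproved} carries a spurious factor of $2$ on the $\const\rcen$ term (apparently inherited from the median case), which taken literally would only yield $(1+4\const)$, so your version is the one the theorem actually requires.
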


\begin{proof}
  As stated above,
  the proof of this theorem is very similar to the proof of \thmref{median}. In fact,
  we can repeat the proof of \thmref{median} almost word for word, except that we need
  to replace the sum function $\sum \limits_{\pnt \in \PntSet}$ by the max function,
  $\max \limits_{\pnt \in \PntSet}$. More specifically, this needs to be done 
  for \Eqref{eq2} in the proof of \claimref{second}, and to replace \Eqref{eq1} from \claimref{first} we 
  instead use the improved \Eqref{eq3} from \claimref{firstImproved}. 
  As the proof can be reconstructed step-by-step from the
  detailed proof of \thmref{median} by making these modifications, we omit
  it for the sake of brevity.
\end{proof}

\subsubsection{A tighter analysis when using Gonzalez's algorithm as a subroutine}
\seclab{improved}
If we use a $2$-approximation algorithm for the subroutine $\kCenter(\PntSet,\qtnt)$, \thmref{center:main} 
implies that $\fCenter(\PntSet,k,\ell)$ is a $9$-approximation algorithm. Here we present a 
tighter analysis for the case when we use the $2$-approximation algorithm of Gonzalez \cite{g-cmmid-85}
(see also \secref{algo:gonzalez}) 
for the subroutine $\kCenter(\PntSet,\qtnt)$.

See \secref{prelims} for definitions and notation introduced previously. 
Some more notation is needed.
Let $\setC^* = \{w_1, w_2, \ldots, w_{k}\}$ be an optimal set of centers. Its cost, $\ropt$, is 
$\max_{\pnt\in \PntSet} \distPk{\setC^*}{\pnt}{\ell}$.  Let 
$\setC = \{c_1, \dots, c_k\}$ be the set of centers returned by $\fCenter(\PntSet,k,\ell)$, 
and let $\ralg$ be its cost.

Let $\qtnt = \floor{k/\ell}$, where for now we assume $\ell\divides k$, i.e, $\qtnt = k / \ell$.
As we show later, this assumption can be removed. When $\fCenter(\PntSet,k,\ell)$ is run, 
it makes a subroutine call to $\kCenter(\PntSet, \qtnt)$. As mentioned, in this section 
we require this subroutine to be the algorithm of Gonzalez \cite{g-cmmid-85}. 
Let $\PntSetA = \{\pntA_1,\dots, \pntA_{\qtnt}\}$ be the set of centers returned by this 
subroutine call. Additionally, let 
$r_i = \dist{\pntA_i}{\PntSetA_{i-1}}$ for $2 \leq i \leq \qtnt$, 
where $\PntSetA_{i-1} = \{\pntA_1,\dots, \pntA_{i-1}\}$. We assume $m > 1$, 
as the $m=1$ case is easier.

The following is easy to see, and is used in the correctness proof for the algorithm
of Gonzalez. See \cite{h-gaa-11} for an exposition.
\begin{lemma}
\lemlab{gonz}
For $i\neq j$, $\dist{\pntA_i}{\pntA_j} \geq r_\qtnt$. 
\end{lemma}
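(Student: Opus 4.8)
The plan is to reduce \lemref{gonz} to two elementary properties of Gonzalez's greedy rule and then combine them. The first property is that the radii form a non-increasing sequence, $r_2 \geq r_3 \geq \cdots \geq r_\qtnt$, so that $r_\qtnt$ is the smallest among them; the second is that any two selected centers $\pntA_i$ and $\pntA_j$ with $i < j$ are separated by at least $r_j$. Granting both, for an arbitrary pair $i \neq j$ I may assume without loss of generality that $i < j$, and then conclude $\dist{\pntA_i}{\pntA_j} \geq r_j \geq r_\qtnt$, which is precisely the statement.

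For the monotonicity, recall that $\pntA_i$ is chosen as a point of $\PntSet$ furthest from the current center set, so $r_i = \dist{\pntA_i}{\PntSetA_{i-1}} = \max_{\pnt \in \PntSet} \dist{\pnt}{\PntSetA_{i-1}}$. Since $\PntSetA_{i-1} \subseteq \PntSetA_i$, the point-to-set distance can only shrink when the set grows, that is $\dist{\pnt}{\PntSetA_i} \leq \dist{\pnt}{\PntSetA_{i-1}}$ for every $\pnt \in \PntSet$; taking the maximum over $\PntSet$ gives $r_{i+1} \leq r_i$, and hence the claimed chain of inequalities. For the separation property, I would note that when $\pntA_j$ is added the current set is $\PntSetA_{j-1}$, which contains $\pntA_i$ because $i \leq j-1$. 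As the point-to-set distance is just the minimum distance to a member of the set, $r_j = \dist{\pntA_j}{\PntSetA_{j-1}} = \min_{\pnt \in \PntSetA_{j-1}} \dist{\pntA_j}{\pnt} \leq \dist{\pntA_j}{\pntA_i}$, which is exactly what is needed.

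I expect the only step carrying any real content to be the monotonicity of the radii; the separation bound and the final combination are immediate once one unwinds the definition of the point-to-set distance and uses that earlier centers persist in the set as later ones are inserted. Since the monotonicity is a standard feature of the greedy furthest-point selection (and is precisely the property underlying the $2$-approximation guarantee of Gonzalez's algorithm), I anticipate no genuine obstacle, and the argument should go through cleanly.
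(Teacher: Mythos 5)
Your proof is correct and is exactly the standard argument for this fact; the paper itself omits the proof, stating only that it ``is easy to see'' and deferring to the exposition in the cited reference, and what you wrote (monotonicity of the radii via shrinking point-to-set distances, plus the separation bound $\dist{\pntA_i}{\pntA_j} \geq r_j$ from $\pntA_i \in \PntSetA_{j-1}$) is precisely that standard proof. No gaps.
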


\begin{lemma}
\lemlab{easy}
  For any $\pntA_i$, $\NNPk{\setC^*}{\pntA_i}{\ell} \subseteq \ball{\pntA_i}{\ropt}$ and
  $\NNPk{\PntSet}{\pntA_i}{\ell} \subseteq \ball{\pntA_i}{\ropt}$.
\end{lemma}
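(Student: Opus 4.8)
The plan is to prove the two inclusions by unpacking the definition of $\ropt$ and the quantity $\distPk{\cdot}{\pntA_i}{\ell}$, and then exploiting two containments of point sets: that each Gonzalez center $\pntA_i$ lies in $\PntSet$, and that the optimal center set satisfies $\setC^* \subseteq \PntSet$. Both inclusions are equivalent to radius bounds: $\NNPk{\setC^*}{\pntA_i}{\ell} \subseteq \ball{\pntA_i}{\ropt}$ is the same as $\distPk{\setC^*}{\pntA_i}{\ell} \leq \ropt$, and likewise $\NNPk{\PntSet}{\pntA_i}{\ell} \subseteq \ball{\pntA_i}{\ropt}$ is the same as $\distPk{\PntSet}{\pntA_i}{\ell} \leq \ropt$. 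So I would reduce the statement to proving these two radius inequalities.

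For the first inequality, I would observe that every center returned by Gonzalez's algorithm is one of the input points, so $\pntA_i \in \PntSet$. Since $\ropt = \max_{\pnt \in \PntSet} \distPk{\setC^*}{\pnt}{\ell}$ is a maximum taken over all of $\PntSet$, and $\pntA_i$ is one of the points in that maximum, we immediately get $\distPk{\setC^*}{\pntA_i}{\ell} \leq \ropt$. By definition of the $\ell$th nearest-neighbor set, the $\ell$ nearest neighbors of $\pntA_i$ in $\setC^*$ all lie at distance at most $\distPk{\setC^*}{\pntA_i}{\ell} \leq \ropt$, which is exactly the first inclusion.

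For the second inequality, I would pass from $\setC^*$ to $\PntSet$ using a counting argument. Having established $\distPk{\setC^*}{\pntA_i}{\ell} \leq \ropt$, the ball $\ball{\pntA_i}{\ropt}$ already contains at least $\ell$ points of $\setC^*$. Because $\setC^* \subseteq \PntSet$, this same ball therefore contains at least $\ell$ points of $\PntSet$. Since $\distPk{\PntSet}{\pntA_i}{\ell}$ is by definition the radius of the smallest closed ball centered at $\pntA_i$ containing at least $\ell$ points of $\PntSet$, and $\ball{\pntA_i}{\ropt}$ is one such ball, we conclude $\distPk{\PntSet}{\pntA_i}{\ell} \leq \ropt$, giving the second inclusion.

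There is no substantial obstacle here; the lemma is routine. The only point that deserves explicit care is the direction of the set-containment argument in the second part: one must note that enlarging the point set from $\setC^*$ to the superset $\PntSet$ can only add candidate points inside the ball, so the minimal enclosing radius for $\ell$ points can only stay the same or shrink, never grow. I would state the two underlying facts ($\pntA_i \in \PntSet$ and $\setC^* \subseteq \PntSet$) explicitly at the outset, since both inclusions hinge on them.
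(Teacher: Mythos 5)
Your proof is correct and follows exactly the same route as the paper's: the first inclusion from $\pntA_i \in \PntSet$ (so $\distPk{\setC^*}{\pntA_i}{\ell} \leq \ropt$), and the second from $\setC^* \subseteq \PntSet$. You have simply spelled out the counting argument that the paper leaves implicit.
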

\begin{proof}
The first claim follows since $\pntA_i\in \PntSet$ and so 
$\distPk{\setC^*}{\pntA_i}{\ell} \leq \ropt$. As $\setC^* \subseteq \PntSet$, the second claim 
follows.
\end{proof}

\begin{lemma}
\lemlab{near}
We have that, $\ralg \leq r_\qtnt+\ropt$.
\end{lemma}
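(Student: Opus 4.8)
The plan is to bound $\distPk{\setC}{\pnt}{\ell}$ for an arbitrary $\pnt \in \PntSet$ by $r_\qtnt + \ropt$ and then take the maximum over $\pnt$, since $\ralg = \max_{\pnt \in \PntSet} \distPk{\setC}{\pnt}{\ell}$. Concretely, the goal is to exhibit $\ell$ points of $\setC$ that all lie within distance $r_\qtnt + \ropt$ of $\pnt$; once such points are found, the smallest ball around $\pnt$ containing $\ell$ points of $\setC$ has radius at most $r_\qtnt + \ropt$, which is exactly the desired bound on $\distPk{\setC}{\pnt}{\ell}$.

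First I would record the standard covering property of Gonzalez's farthest-point rule that drives the whole argument: since $\pntA_\qtnt$ is selected as the point of $\PntSet$ farthest from $\PntSetA_{\qtnt-1}$, and this farthest distance is by definition $r_\qtnt$, every point of $\PntSet$ lies within distance $r_\qtnt$ of $\PntSetA$. Thus, fixing $\pnt \in \PntSet$ and letting $\pntA_i = \nnPk{\PntSetA}{\pnt}{1}$ be its nearest Gonzalez center, we have $\dist{\pnt}{\pntA_i} \leq r_\qtnt$. Next I would invoke \lemref{easy}, which gives $\NNPk{\PntSet}{\pntA_i}{\ell} \subseteq \ball{\pntA_i}{\ropt}$, i.e. the $\ell$ nearest neighbors of $\pntA_i$ in $\PntSet$ all lie within distance $\ropt$ of $\pntA_i$. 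By construction of $\fCenter(\PntSet,k,\ell)$, these $\ell$ points are contained in $\setC$. For any such point $\pntC$, the triangle inequality then yields $\dist{\pnt}{\pntC} \leq \dist{\pnt}{\pntA_i} + \dist{\pntA_i}{\pntC} \leq r_\qtnt + \ropt$. This produces $\ell$ distinct points of $\setC$ (distinctness follows from the lexicographic tie-breaking convention fixed in \secref{prelims}) within distance $r_\qtnt + \ropt$ of $\pnt$, so $\distPk{\setC}{\pnt}{\ell} \leq r_\qtnt + \ropt$, and taking the maximum over $\pnt \in \PntSet$ completes the proof.

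The one step needing care—and the main, if modest, obstacle—is justifying that every point of $\PntSet$ is within $r_\qtnt$ of $\PntSetA$. This is the usual covering guarantee of Gonzalez's algorithm: the radii satisfy $r_2 \geq r_3 \geq \cdots \geq r_\qtnt$ (each newly chosen farthest point can only be closer than the previous one, as $\PntSetA_{i-1}$ grows), and $r_\qtnt$ is precisely the largest distance from any point of $\PntSet$ to $\PntSetA_{\qtnt-1} \subseteq \PntSetA$ at the final step. I would either state this explicitly or cite the exposition in \cite{h-gaa-11} already used for \lemref{gonz}. Everything else is a direct application of \lemref{easy} and the triangle inequality, so no further difficulty is expected.
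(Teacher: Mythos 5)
Your proof is correct and follows essentially the same route as the paper's: both use the Gonzalez covering guarantee $\dist{\pnt}{\PntSetA}\leq r_\qtnt$, the fact that $\NNPk{\PntSet}{\pntA}{\ell}\subseteq\setC$ forces $\distPk{\setC}{\pntA}{\ell}=\distPk{\PntSet}{\pntA}{\ell}\leq\ropt$, and a triangle inequality to conclude. The only cosmetic difference is that you exhibit the $\ell$ witnesses explicitly where the paper invokes \obsref{lipschitz}, which is the same argument packaged differently.
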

\begin{proof}
As in Gonzalez's algorithm, we have 
$r_\qtnt = \max_{\pnt\in \PntSet} \dist{\pnt}{\PntSetA_{\qtnt-1}}$, 
and so $\dist{\pnt}{\PntSetA} \leq r_\qtnt$ for any $\pnt \in \PntSet$. 
Consider any point $\pnt\in \PntSet$, and 
let $\pntA = \nnPk{\PntSetA}{\pnt}{1}$.  
By how $\fCenter(\PntSet,k,\ell)$ is defined, $\NNPk{\PntSet}{\pntA}{\ell}\subseteq \setC$,
and so $\distPk{\setC}{\pntA}{\ell} = \distPk{\PntSet}{\pntA}{\ell} 
\leq \distPk{\setC^*}{\pntA}{\ell} \leq \ropt$.
By \obsref{lipschitz} we have, 
$\distPk{\setC}{\pnt}{\ell} \leq \dist{\pnt}{\pntA} + \distPk{\setC}{\pntA}{\ell}
\leq r_\qtnt + \ropt$.
\end{proof}

\begin{lemma}
\lemlab{unique}
If $\ralg > 3\ropt$, then for any $1 \leq i\neq j \leq \qtnt$, 
$\ball{\pntA_i}{\ropt}$ and $\ball{\pntA_j}{\ropt}$ are disjoint and 
each contains at least $\ell$ centers from $\setC^*$.
\end{lemma}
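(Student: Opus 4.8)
The plan is to derive a single lower bound on $r_\qtnt$ from the hypothesis, and then read off both conclusions directly from the lemmas already established. First I would combine the assumption $\ralg > 3\ropt$ with \lemref{near}, which gives $\ralg \leq r_\qtnt + \ropt$. Chaining these yields $r_\qtnt + \ropt \geq \ralg > 3\ropt$, and hence $r_\qtnt > 2\ropt$. This inequality is the crux of the argument: it formalizes the intuition that once the algorithm's cost is large relative to the optimum, the Gonzalez centers must be spread far apart.

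For the disjointness claim, I would invoke \lemref{gonz}, which guarantees $\dist{\pntA_i}{\pntA_j} \geq r_\qtnt$ for all $i \neq j$. Combined with $r_\qtnt > 2\ropt$ this gives $\dist{\pntA_i}{\pntA_j} > 2\ropt$. A routine triangle-inequality argument then finishes: if some point $z$ were to lie in both $\ball{\pntA_i}{\ropt}$ and $\ball{\pntA_j}{\ropt}$, we would have $\dist{\pntA_i}{\pntA_j} \leq \dist{\pntA_i}{z} + \dist{z}{\pntA_j} \leq 2\ropt$, contradicting the separation just obtained. Hence the two balls are disjoint.

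For the counting claim, I would simply appeal to \lemref{easy}, which already establishes $\NNPk{\setC^*}{\pntA_i}{\ell} \subseteq \ball{\pntA_i}{\ropt}$. Since $\cardin{\setC^*} = k \geq \ell$, the set $\NNPk{\setC^*}{\pntA_i}{\ell}$ consists of exactly $\ell$ distinct centers of $\setC^*$, all of which lie in $\ball{\pntA_i}{\ropt}$. Thus each ball contains at least $\ell$ centers from $\setC^*$, which completes the lemma.

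I do not anticipate any serious obstacle here; essentially all of the content sits in the bound $r_\qtnt > 2\ropt$, after which both halves are immediate consequences of \lemref{gonz} and \lemref{easy} respectively. The only point worth flagging is why the lemma bundles these two facts together: the disjointness together with the per-ball count of $\ell$ yields, across the $\qtnt$ centers, at least $\qtnt \ell = k$ centers of $\setC^*$ distributed among disjoint balls. Since $\cardin{\setC^*} = k$, this forces the balls to partition $\setC^*$ into groups of exactly $\ell$, a structural fact presumably exploited in the downstream argument that derives the final contradiction to $\ralg > 3\ropt$.
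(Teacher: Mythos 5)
Your proof is correct and follows essentially the same route as the paper: both derive $r_\qtnt > 2\ropt$ by combining the hypothesis with \lemref{near}, use \lemref{gonz} plus the triangle inequality for disjointness, and cite \lemref{easy} for the count of $\ell$ centers per ball. The only difference is that you spell out the triangle-inequality step and the distinctness of the $\ell$ nearest neighbors, which the paper leaves implicit.
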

\begin{proof}
Let $\pntA_i$ and $\pntA_j$ be any two distinct centers in $\PntSetA$.  
By \lemref{gonz} and \lemref{near}, 
$\dist{\pntA_i}{\pntA_j}\geq r_\qtnt \geq \ralg-\ropt > 2\ropt$, which implies that,
$\ball{\pntA_i}{\ropt} \cap \ball{\pntA_j}{\ropt} = \emptyset$. 
Each ball contains $\ell$ centers from $\setC^*$ by \lemref{easy}. 
\end{proof}

\begin{lemma}
\lemlab{final}
We have that, $\ralg\leq 3\ropt$.  
\end{lemma}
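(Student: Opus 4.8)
The plan is to argue by contradiction: assume $\ralg > 3\ropt$ and derive that $\setC^*$ would have to contain strictly more than $k$ centers. The natural starting point is \lemref{unique}, which under exactly this assumption already supplies the $\qtnt$ balls $\ball{\pntA_1}{\ropt}, \dots, \ball{\pntA_\qtnt}{\ropt}$ pairwise disjoint, each containing at least $\ell$ distinct centers of $\setC^*$. Since we are in the case $\ell \divides k$, we have $\qtnt\ell = k$, so these balls already account for all $k$ centers of $\setC^*$, exactly $\ell$ in each. The whole idea is then to produce one more disjoint ball, centered at the point witnessing the algorithm's cost, carrying $\ell$ further centers and thereby overcounting $\setC^*$.

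Next I would introduce that witness. Let $\pnt \in \PntSet$ attain $\distPk{\setC}{\pnt}{\ell} = \ralg > 3\ropt$, and consider $\ball{\pnt}{\ropt}$. Because $\pnt \in \PntSet$ and $\ropt = \max_{\pnt' \in \PntSet}\distPk{\setC^*}{\pnt'}{\ell}$, we have $\distPk{\setC^*}{\pnt}{\ell} \leq \ropt$, so $\ball{\pnt}{\ropt}$ likewise contains at least $\ell$ centers of $\setC^*$. The decisive point is to show that $\ball{\pnt}{\ropt}$ is disjoint from every $\ball{\pntA_i}{\ropt}$; granting this, its $\ell$ centers are distinct from the $k$ already counted, forcing $\cardin{\setC^*} \geq (\qtnt+1)\ell = k + \ell > k$, which is the contradiction.

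The main obstacle, and really the only content beyond \lemref{unique}, is establishing this disjointness, and I would do so by contraposition. If $\ball{\pnt}{\ropt}$ met some $\ball{\pntA_i}{\ropt}$ then $\dist{\pnt}{\pntA_i} \leq 2\ropt$; combining this with $\NNPk{\PntSet}{\pntA_i}{\ell} \subseteq \setC$ and \lemref{easy}, which together give $\distPk{\setC}{\pntA_i}{\ell} = \distPk{\PntSet}{\pntA_i}{\ell} \leq \ropt$, \obsref{lipschitz} would yield
\[
  \distPk{\setC}{\pnt}{\ell} \leq \dist{\pnt}{\pntA_i} + \distPk{\setC}{\pntA_i}{\ell} \leq 2\ropt + \ropt = 3\ropt,
\]
contradicting $\distPk{\setC}{\pnt}{\ell} = \ralg > 3\ropt$. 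Hence no such intersection exists, and the packing/counting argument closes. The remaining case, where $\ell$ does not divide $k$, is deferred, consistent with the earlier remark that the divisibility assumption will be removed later.
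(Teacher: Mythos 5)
Your proof is correct. It uses the same two key ingredients as the paper's proof --- \lemref{unique} together with the chain $\distPk{\setC}{\pnt}{\ell} \leq \dist{\pnt}{\pntA_i} + \distPk{\setC}{\pntA_i}{\ell} \leq 2\ropt + \ropt$ --- but assembles them into a different contradiction. The paper uses the count $\qtnt\ell = k$ to conclude that \emph{every} center of $\setC^*$ lies within $\ropt$ of some point of $\PntSetA$, then routes an arbitrary $\pnt \in \PntSet$ through its nearest optimal center $\pntB$ to find a $\pntA \in \PntSetA$ with $\dist{\pnt}{\pntA} \leq 2\ropt$, and thereby bounds $\distPk{\setC}{\pnt}{\ell} \leq 3\ropt$ for all $\pnt$, contradicting $\ralg > 3\ropt$ directly. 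You instead keep the contradiction on the counting side: the witness point's ball $\ball{\pnt}{\ropt}$ must be disjoint from all the $\ball{\pntA_i}{\ropt}$ (else the same chain refutes the assumption), and being a $(\qtnt+1)$st pairwise-disjoint ball containing $\ell$ centers of $\setC^*$, it forces $\cardin{\setC^*} \geq (\qtnt+1)\ell = k+\ell > k$. Your arrangement is slightly leaner, avoiding the detour through $\pntB$ and the explicit assignment of optimal centers to points of $\PntSetA$. What the paper's arrangement buys is that its intermediate conclusion --- each center of $\setC^*$ is within $\ropt$ of $\PntSetA$ --- is precisely what gets reused and perturbed in the proof of \thmref{center} to handle the case where $\ell$ does not divide $k$; your packing argument would need a separate (though doable) adaptation there. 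Deferring that case is legitimate, since the section's standing assumption for this lemma is $\ell \divides k$.
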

\begin{proof}
Suppose otherwise that $\ralg > 3\ropt$.  
By \lemref{unique}, for $i = 1,\dots,\qtnt$, 
$\cardin{\ball{\pntA_i}{\ropt} \cap \setC^*} \geq \ell$, and for $1 \leq i < j \leq \qtnt$,
$\ball{\pntA_i}{\ropt} \cap \ball{\pntA_j}{\ropt} = \emptyset$. Assign all points in 
$\setC^* \cap \ball{\pntA_i}{\ropt}$ to $\pntA_i$. Notice, $\pntA_i$ is the unique point
from $\PntSetA$ within distance $\ropt$ for any point assigned to it.  
Now $\cardin{\PntSetA} = \qtnt = k/\ell$, and each point in $\PntSetA$ 
gets at least $\ell$ points of 
$\setC^*$ assigned to it uniquely. As such, there are at least $\qtnt \ell = k$ points of 
$\setC^*$ assigned to some point of $\PntSetA$. Since $\cardin{\setC^*}= k $, it follows that 
each center in $\setC^*$ gets assigned to a center in $\PntSetA$ within distance $\ropt$. 
For $\pnt \in \PntSet$, let $\pntB$ be its closest center in $\setC^*$.  Let
$\pntA$ be $\pntB$'s center from $\PntSetA$ in distance $\leq \ropt$. We have 
$
\dist{\pnt}{\pntA} \leq \dist{\pnt}{\pntB}+ \dist{\pntB}{\pntA} \leq \ropt + \ropt = 2\ropt,
$
by the triangle inequality. As $\NNPk{\PntSet}{\pntA}{\ell} \subseteq \setC$, we have that
$\distPk{\setC}{\pntA}{\ell} = \distPk{\PntSet}{\pntA}{\ell} \leq \distPk{\setC^*}{\pntA}{\ell}
\leq \ropt$. By \obsref{lipschitz}, we have that, 
$\distPk{\setC}{\pnt}{\ell} \leq \distPk{\setC}{\pntA}{\ell} + \dist{\pnt}{\pntA} 
\leq \ropt + 2\ropt = 3\ropt$. This implies $\ralg \leq 3 \ropt$, a contradiction.
\end{proof}

\begin{theorem}
\thmlab{center}
For a given instance of $\FTC(\PntSet,k,\ell)$, when using the 
algorithm of Gonzalez \cite{g-cmmid-85} for the subroutine $\kCenter(\PntSet,\qtnt)$, 
the algorithm $\fCenter(\PntSet,k,\ell)$ achieves a 
$4$-approximation to the optimal solution to $\FTC(\PntSet,k,\ell)$, 
and a 3-approximation when $\ell\divides k$. 
\end{theorem}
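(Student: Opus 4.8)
The plan is to split on whether $\ell \divides k$. When $\ell \divides k$ we have $\qtnt = k/\ell$ exactly, and \lemref{final} applies unchanged to give $\ralg \leq 3\ropt$, which is the asserted $3$-approximation. So the only remaining work is the general bound $\ralg \leq 4\ropt$ when $\ell \nmid k$. Here the counting in \lemref{final} no longer closes: the $\qtnt$ disjoint balls $\ball{\pntA_i}{\ropt}$ now capture only $\qtnt\ell < k$ of the centers of $\setC^*$, so one can no longer conclude that every center of $\setC^*$ sits within $\ropt$ of some Gonzalez center. Note, however, that the proofs of \lemref{gonz}, \lemref{easy}, and \lemref{near} never invoked the divisibility assumption, so all three remain available in the general case.

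To recover the contradiction I would argue as in \lemref{final} but synthesize one extra ball from the worst point itself. Assume for contradiction that $\ralg > 4\ropt$, and pick $\pnt^{*} \in \PntSet$ with $\distPk{\setC}{\pnt^{*}}{\ell} = \ralg$. Let $\pntA = \nnPk{\PntSetA}{\pnt^{*}}{1}$ be the Gonzalez center nearest to $\pnt^{*}$. Since $\NNPk{\PntSet}{\pntA}{\ell} \subseteq \setC$, the reinforcement step gives $\distPk{\setC}{\pntA}{\ell} = \distPk{\PntSet}{\pntA}{\ell} \leq \distPk{\setC^*}{\pntA}{\ell} \leq \ropt$, exactly as in \lemref{near}. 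Feeding this into \obsref{lipschitz} yields $\ralg = \distPk{\setC}{\pnt^{*}}{\ell} \leq \dist{\pnt^{*}}{\pntA} + \ropt$, so $\dist{\pnt^{*}}{\PntSetA} = \dist{\pnt^{*}}{\pntA} \geq \ralg - \ropt > 3\ropt$; in particular $\pnt^{*}$ lies at distance more than $2\ropt$ from every $\pntA_i$.

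Next I would treat $\pnt^{*}$ as a $(\qtnt+1)$-st center. By \lemref{near} and \lemref{gonz}, $\dist{\pntA_i}{\pntA_j} \geq r_\qtnt \geq \ralg - \ropt > 3\ropt$ for all $i \neq j$, so the $\qtnt + 1$ points $\pntA_1, \dots, \pntA_\qtnt, \pnt^{*}$ are pairwise more than $2\ropt$ apart. Consequently the balls $\ball{\pntA_1}{\ropt}, \dots, \ball{\pntA_\qtnt}{\ropt}, \ball{\pnt^{*}}{\ropt}$ are pairwise disjoint, and since each of these $\qtnt + 1$ points belongs to $\PntSet$, \lemref{easy} places at least $\ell$ centers of $\setC^*$ inside each. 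Disjointness makes these center sets disjoint, so $\cardin{\setC^*} \geq (\qtnt+1)\ell$. But $\qtnt = \floor{k/\ell}$ gives $\qtnt + 1 > k/\ell$, hence $(\qtnt+1)\ell > k = \cardin{\setC^*}$, a contradiction. Therefore $\ralg \leq 4\ropt$, which together with the divisible case proves the theorem.

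The one delicate step is the separation estimate for $\pnt^{*}$: the whole argument rests on $\dist{\pnt^{*}}{\PntSetA} > 2\ropt$, which in turn depends entirely on the bound $\distPk{\setC}{\pntA}{\ell} \leq \ropt$, i.e.\ on the fact that reinforcing $\pntA$ with its $\ell$ nearest neighbors already drives its fault-tolerant radius down to $\ropt$. Once $\pnt^{*}$ is certified to behave like a genuine extra Gonzalez center, the packing count against the $k$ centers of $\setC^*$ is routine. I would also remark that the identical derivation goes through under the weaker hypothesis $\ralg > 3\ropt$: the separations merely drop from $3\ropt$ to $2\ropt$, which is still the exact threshold for the radius-$\ropt$ balls to remain disjoint, so the method in fact yields a $3$-approximation even when $\ell \nmid k$, sharpening the stated $4$ to $3$.
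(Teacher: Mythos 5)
Your proof is correct, and for the case where $\ell$ does not divide $k$ it takes a genuinely different route from the paper's. The paper stays with the $\qtnt$ disjoint balls of \lemref{unique} and reasons about the locations of the optimal centers: at least $\qtnt\ell$ of them lie within $\ropt$ of $\PntSetA$, the at most $r < \ell$ remaining ``orphans'' must each have a non-orphan optimal center within $\ropt$ of them, hence every $w \in \setC^*$ is within $2\ropt$ of $\PntSetA$, and chasing $\pnt \to \nnPk{\setC^*}{\pnt}{1} \to \PntSetA$ costs $\ropt + 2\ropt + \ropt = 4\ropt$. You instead adjoin the worst point $\pnt^*$ to $\PntSetA$ as a $(\qtnt+1)$-st well-separated point and run a pure packing argument: $\qtnt+1$ pairwise disjoint balls of radius $\ropt$, each forced to contain $\ell$ distinct centers of $\setC^*$, against only $k < (\qtnt+1)\ell$ available centers. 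The two ingredients this rests on both check out: $\dist{\pnt^*}{\PntSetA} \geq \ralg - \ropt$ follows from the reinforcement bound $\distPk{\setC}{\pntA}{\ell} \leq \ropt$ together with \obsref{lipschitz}, and \lemref{easy} does apply to $\pnt^*$ because its proof only uses membership in $\PntSet$, not in $\PntSetA$. Moreover your closing remark is right and is the real payoff of your route: the contradiction already fires under the weaker hypothesis $\ralg > 3\ropt$, since the resulting pairwise separations exceed $2\ropt$, which is exactly what disjointness of the radius-$\ropt$ balls requires, and $(\qtnt+1)\ell > k$ holds for all $k,\ell$ with $\qtnt = \floor{k/\ell}$. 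So your argument actually yields $\ralg \leq 3\ropt$ unconditionally, strictly sharpening the paper's $4$-approximation in the non-divisible case. The only loose end is the degenerate case $\qtnt = 1$, where $r_\qtnt$ and \lemref{gonz} are undefined; your argument does not need them there, since only the separation of $\pnt^*$ from the single Gonzalez center is used, and the paper waves this case away as well.
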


\begin{proof}
  The $\ell \divides k$ case follows from \lemref{final}. If $\ell$ does not divide $k$,
  the proof of \lemref{final} needs to be changed as follows. Suppose, $k = \ell*\qtnt + r$
  for some integer $0 < r < \ell$. Let $k' = \ell * \qtnt$.
  %Here the set $\setC$ has at most $k' = \ell * \qtnt$ centers
  %after the second step, and we would need to add at least $r$ centers from 
  %$\PntSet \setminus \setC$. However, for the remainder of the proof, we use only $\setC$ in
  %the analysis, and we assume that $\setC$ has exactly $k'$ centers -- if not, we can add
  %more centers until it has $k$ centers. 
  As in the proof of \lemref{final}, it follows from
  \lemref{unique}, that if $\ralg > 3\ropt$, then at least $k'$ centers from
  $\setC^*$ will be within distance at most $\ropt$ to a center in $\PntSetA$. Therefore,
  there are at most $k - k' = r$ centers from $\setC^*$, that are not within $\ropt$ to some
  point in $\PntSetA$. However, each such center needs $\ell > r$ centers from $\setC^*$,
  to be within distance $\ropt$, and so each such center must be within distance $\ropt$ from
  one of the centers of $\setC^*$ that is near a center in $\PntSetA$, i.e. within distance
  $\ropt$ to some center in $\PntSetA$. Hence, by the triangle inequality, each center in
  $\setC^*$, has a center of $\PntSetA$ within distance at most $2 \ropt$. Repeating the 
  argument of \lemref{final}, with this different upper bound, we get that $\ralg \leq 4 \ropt$.
\end{proof}

%%%%%%%%%%%%%%%%%%%%%%%%%%%%%%%%%%%%%%%%%%%%%%%%%%%%%%%%%%%%%%%%%%%%
%%%%%%%%%%%%%%%%%%%%%%%%%%%%%%%%%%%%%%%%%%%%%%%%%%%%%%%%%%%%%%%%%%%%

\section{Conclusions}
\seclab{conclusions}

In this paper we investigated fault-tolerant variants of the $k$-center and $k$-median 
clustering problems. Our algorithm achieves a $(1 + 2\const)$-approximation 
(resp. $(1 + 4\const)$-approximation) factor, where 
$\const$ is the approximation factor for the non-fault-tolerant 
$\qtnt$-center (resp. $\qtnt$-median) algorithm that 
we use as a subroutine. Using a better analysis for the case
of fault-tolerant $k$-center, when we use Gonzalez's algorithm as a subroutine, we 
showed that our algorithm has a tighter approximation ratio of $4$. For 
fault-tolerant $k$-median, we get a $(5 + 4\sqrt{3} + \eps) \approx 12$-approximation 
algorithm, by using  the recent algorithm of 
Li and Svensson as a subroutine \cite{ls-akmpa-13}. We can see several questions 
for future research.
\begin{itemize}
  \item The best known approximation factor for the fault-tolerant $k$-center problem is
        $2$ by Chaudhuri \etal \cite{cgr-pnkcp-98} and Khuller \etal \cite{kps-ftkcp-00}. 
        Their techniques are based on the work of Hochbaum and Shmoys \cite{hs-bphkc-85} and
        Krumke \cite{k-gpcp-95}. Our algorithm, which
        leads to a $4$-approximation for fault-tolerant $k$-center is
        based on the $2$-approximation to $k$-center by Gonzalez \cite{g-cmmid-85}. Can the 
        algorithm or its analysis be improved to get a factor $2$-approximation? 
        Also, can we deal with the second variant of fault-tolerant 
        $k$-center in the work of Khuller \etal -- which also happens to be the version 
        considered by Krumke and Chaudhuri \etal?

  \item The fault-tolerant $k$-median variant that we investigate, is very different 
        from the work
        of Swamy and Shmoys \cite{ss-ftfl-03}, but their techniques are more 
        technically involved.
        As we show, we reduce the fault-tolerant version to the non-fault-tolerant version
        for a smaller number of centers. An 
        important question that arises is the following: Can the version considered by
        Swamy and Shmoys be reduced to the non-fault-tolerant version, or some 
        variant thereof, i.e., can we use some simpler problem as an oracle to get a
        fault-tolerant $k$-median algorithm, for the version of Swamy and Shmoys?
\end{itemize}

\section*{Acknowledgements}
We would like to thank Sariel Har-Peled for useful discussions, and in particular for
the discussion that led us to think about this problem.

\ConfVer{
  \bibliographystyle{abbrv}
}
\FullVer{
\bibliographystyle{plain}%
}
\bibliography{lnnkc}%

\end{document}